\newcommand{\ud}{\mathrm{d}}
\newcommand{\ui}{\mathrm{i}}
\newcommand{\ue}{\mathrm{e}}
\newcommand{\R}{\mathds{R}}
\newcommand{\Z}{\mathds{Z}}
\newcommand{\N}{\mathds{N}}
\newcommand{\C}{\mathds{C}}
\newcommand{\Q}{\mathds{Q}}
\newcommand{\Prob}{\mathds{P}}
\newcommand{\E}{\mathds{E}}
\newcommand{\cU}{{\mathcal U}}
\newcommand{\ba}{\mathbf{a}}
\newcommand{\bL}{\mathbf{L}}
\newcommand{\bsigma}{\mathbf{\sigma}}
  \newcommand{\erf}{\operatorname{erf}}
\providecommand{\norm}[1]{\lVert#1\rVert}
\providecommand{\biggnorm}[1]{\bigg\lVert#1\bigg\rVert}
\providecommand{\abs}[1]{\lvert#1\rvert}
\providecommand{\bigabs}[1]{\big\lvert#1\big\rvert}
\providecommand{\biggabs}[1]{\bigg\lvert#1\bigg\rvert}
\newcommand{\la}{\langle}
\newcommand{\ra}{\rangle}
\newcommand{\bx}{\mathbf{x}}
\newcommand{\be}{\boldsymbol{e}}
\newcommand{\bA}{\mathbf{A}}
\newcommand{\bpm}{\begin{pmatrix}}
\newcommand{\epm}{\end{pmatrix}}
\newcommand{\veps}{\varepsilon}
\newtheorem{thm}{Theorem}
\newtheorem{lem}{Lemma}
\newtheorem{cor}{Corollary}
\newtheorem{Def}{Definition}
\newtheorem{cond}{Condition}
\begin{document}
\title{Entropy of eigenfunctions on quantum  graphs}

\author{Lionel Kameni and Roman Schubert\thanks{ \tt roman.schubert@bristol.ac.uk} \\  School of Mathematics \\ 
University of Bristol }

\maketitle
\abstract{We consider families of finite quantum graphs of increasing size and we are interested in how  eigenfunctions are distributed over the graph. 
As a measure for the distribution of an eigenfunction on a graph we  introduce the entropy, it has the property that  a large value of the 
entropy of an eigenfunction implies that  it cannot be 
localised on a small set on the graph. We then derive lower bounds for the entropy of eigenfunctions which depend on the topology of the graph and the 
boundary conditions at the vertices. The optimal bounds are obtained for expanders with large girth, the bounds are similar to the ones obtained by 
Anantharaman et.al.  for eigenfunctions on manifolds of negative curvature, and are based on the entropic uncertainty principle.  For comparison we  compute as well the 
average behaviour of entropies on Neumann star graphs, where the entropies are much smaller. Finally we compare  our lower bounds with numerical  results for regular graphs and star 
graphs with different boundary conditions.  }

\numberwithin{equation}{section}

\section{Introduction}

Differential operators on metric graphs have an interesting and rich spectral theory and 
can serve as  model systems for the study of questions from spectral geometry, 
quantum chaos and mathematical physics, see \cite{GnuSmi06,BerKuch13}. In this paper we will focus on 
the Laplacian on a metric graph with suitable boundary conditions at the vertices and we are interested in
the  distribution of the eigenfunctions and how this distribution depends on the topology of the graph and on the boundary conditions. 

One of the main open questions in this area is if there holds an analogue of the quantum ergodicity theorem. 
This theorem states that if $(M,g)$ is a compact Riemannian manifold whose geodesic flow is 
ergodic, then almost all eigenfunctions of the  Laplace Beltrami operator become equidistributed in the high energy limit. 
So far for graphs only partial results are available, quantum star graphs have been shown 
not to be  quantum ergodic, \cite{BerKeaWin04}, and in \cite{BerKeaSmi07} a special class of graphs 
was constructed which are quantum ergodic. In \cite{GnuKeaPio08, GnuKeaPio10} a much more general approach towards a the study of the statistical distribution of 
eigenfunctions and quantum ergodicity was developed, but the methods are not yet rigorous. It is important to note that for quantum graphs we 
expect quantum ergodicity to hold only in the limit of large graphs, which corresponds to the semiclassical limit. For graphs of fixed size a variety 
of limit measures can occur, see \cite{SchaKot03}, and they have been classified recently \cite{CdV14}.  Quantum graphs fall into the general class of systems 
with so called ray-splitting, and a general quantum ergodicity theorem for manifolds with ray splitting has been recently derived in \cite{JakSafStroh13}, but the results  
do not apply immediately to quantum graphs. 
  For the analogous problem on discrete graphs, with the 
eigenfunctions of the discrete Laplacian, a quantum ergodicity theorem for $d$-regular expanding graphs was established recently 
\cite{AnadMas13}. In this case  equidistribution  emerges as well only when the graph size tends to infinity. 

In this paper we will not study quantum ergodicity directly, but we will concentrate instead on the entropy of eigenfunctions and 
derive lower bounds in terms of geometric properties of the graphs. These estimates are inspired by  analogous  results on eigenfunctions 
on Riemannian manifolds by Anantharaman et.al., \cite{Ana08,AnaNon07}, and quantum maps in \cite{AnaNon07b,Gut10}.  We make in particular 
heavy use of  the entropic uncertainty principle,  as in  \cite{AnaNon07}. 
Lower bounds on the entropy imply constraints on how localised a limit measure of eigenfunctions can be, in particular a positive entropy 
excludes measures which are concentrated on a finite number of periodic orbits, so called strong scars. 

\section{Background and main results}

We will consider finite simple graphs $G=(V,E)$ with a vertex set $V$ and edge set $E$. We will denote the number of vertices by $\abs{G}$ and the number 
of edges by $\abs{E}$. The vertices are labeled by numbers $i\in \{1, 2, \cdots ,\abs{G}\}$ and any edge $e\in E$ can be labeled by the pair $(i,j)$
of vertices it connects, i.e, $e=(i,j)$. We will consider only undirected graphs, i.e., $(i,j)=(j,i)$ and that the graph is simple means that there are no multiple edges between any two vertices and no loops. The topology of the graph is encoded in the adjacency matrix 
$A=(a_{ij})$ which is a symmetric $\abs{G}\times \abs{G}$ matrix defined as 
\begin{equation}
a_{ij}=\begin{cases} 1 & (i,j)\in E\\ 0 & (i,j)\notin E\end{cases}\,\, .
\end{equation}

To each edge $e=(i,j)$ we give a length $L_e>0$ and we will 
identify the edge $e$ with the interval $[0,L_e]$ of length $L_e$. On $e=(i,j)$ we will use two coordinate systems, $x_{ij}\in [0,L_e]$ is defined by 
$x_{ij}=0$ denoting vertex $i$ and $x_{ij}=L_e$ vertex $j$. Then we have  $x_{ij}=L_e-x_{ji}$. The choice of coordinates introduces an orientation, and we will 
call an oriented edge a bond and denote it by $b=[i,j]$, then the reversed bond is $\hat b=[j,i]\neq [i,j]$.  We will denote the number of bonds by  $B=2\abs{E}$. 
A function on the graph is then a collection of $\abs{E}$ functions, 
one on each edge, $f_e : [0,L_e]\to \C$,  and the Laplace operator acts on each edge as the second order derivative, 
\begin{equation}
\Delta f_e=f_e''\,\, . 
\end{equation}
Hence an eigenfunction with 
eigenvalue $k^2$ is on edge $e=(i,j)$ of the form 
\begin{equation}\label{eq:EF}
f_e=a_{[i,j]}\ue^{\ui k x_{ij}}+a_{[j,i]}\ue^{\ui k x_{ji}}\,\, .
\end{equation}
In order that the eigenvalue problem is well defined we have to impose suitable boundary conditions at the vertices where several edges meet.  
These lead to unitary scattering matrices $\sigma^{(i)}$ at each vertex $i\in V$, see \cite{KosSchr99,Kuch04}.  If the vertex $i$ has degree $d_i$, then $\sigma^{(i)}$ is an 
$d_i\times d_i$ matrix, and the 
the boundary conditions for the eigenfunctions become 
\begin{equation}\label{eq:a-cond}
a_{[i,j]}=\sum_{j'\sim i} \sigma^{(i)}_{[ij],[j'i]}\ue^{\ui k L_{j'i}}  a_{[j',i]}\,\, ,
\end{equation}
where $i\sim j'$ means $j'$ has to be adjacent to $i$. The matrix $\sigma^{(i)}$ describes how incoming waves with wavenumber $k$ are scattered at the vertex $i$ onto outgoing waves, in general 
the matrix can depend on $k$, but in this paper we will restrict ourselves to two types of boundary conditions which lead to $k$-independent local $S$-matrices:

\begin{itemize} 
\item A function $f$ satisfies \emph{Neumann boundary conditions}, if at each vertex the function is continuous and the sum of the normal derivatives  at each vertex is zero.  For these the $S$-matrix at a vertex $i$ with degree $d_i$ reads  
\begin{equation}\label{eq:Neumann_S}
\sigma^{(i)}_{[ij],[j'i]} =\begin{cases} \frac{2}{d_i}-1 & j=j'\\  \frac{2}{d_i}  & j\neq j'\end{cases}\,\, .
\end{equation}
Notice that for Neumann boundary condition backscattering becomes dominant for large degree, in particular if $d_i\to\infty$ then $\sigma^{(i)}\to I$.  
\item \emph{Equi-transmitting boundary conditions}. These have been introduced in \cite{HarSmiWin07}, and they are characterised by the property that 
\begin{equation}
\abs{\sigma^{(i)}_{[ij],[j'i]}}^2=\begin{cases} 0 & j=j'\\ \frac{1}{d_i-1} & j\neq j'\end{cases}\,\, .
\end{equation}
With these boundary conditions backscattering is forbidden and an incoming wave is totally transmitted with equal probabilities 
to all outgoing bonds.  These boundary conditions do not exist for arbitrary degree $d_i$, the degree has at least to be even, and in this paper 
we will stick to the case that $d=p+1$, where $p$ is prime, and then we can chose 
\begin{equation}\label{eq:equi_trans_S}
\sigma:=\frac{1}{\sqrt{p}}\bpm 0 & 1 &\cdots & 1\\ 1 & &  &\\ \vdots & & C & \\ 1 & & & \epm
\end{equation}
with $C=(\chi(i-j))$ and $\chi(k)$ being the Legendre symbol
\begin{equation}\label{eq:Leg}
\chi(k)=\bigg(\frac{k}{p}\bigg):=\begin{cases} 0 & k=0 \mod p\\ 1 & k=\text{square}\mod p\\ -1 & k=\text{not square}\mod p\end{cases}\,\, .
\end{equation}
\end{itemize}

\begin{Def}\label{def:quantum_graph} A quantum graph $\hat G=(G, \bL, \bsigma)$ is a graph $G=(V,E)$ with a length $L_e>0$ assigned to each edge $e\in E$ and a unitary $d_i\times d_i$  matrix 
$\sigma^{(i)}$ assigned to each vertex $i\in V$.  The length are collected in the vector $\bL$ and the scattering matrices in the set $\bsigma=\{\sigma^{(i)}\, ;\, i\in V\}$.

To a quantum graph 
we associate its total scattering matrix $\cU_{\hat G}(k)$ which is a $B\times B$, $B=2\abs E$, unitary matrix with elements 
\begin{equation}
u_{[ij][kl]}= \delta_{jk} \sigma^{(j)}_{[ij][jl]}\ue^{\ui k L_{jl}} \,\,.
\end{equation}
\end{Def} 

By \eqref{eq:EF} an eigenfunction of the Laplace operator on the graph $G$ is  uniquely determined by the vector of $B=2\abs{E}$ coefficients 
$a_{[ij]}$, we will denote this vector by 
\begin{equation}
\ba\in \C^{B}\,\, .
\end{equation}
The conditions \eqref{eq:a-cond} can then be reformulated in terms of the unitary matrix $\cU_{\hat G}(k)$ acting on the vector $\ba$ as 
\begin{equation}\label{eq:ev-eq}
\cU_{\hat G}(k)\ba=\ba\,\, . 
\end{equation}
This gives a condition for the eigenvalues:  $-k^2$, $k\neq 0$,  is an eigenvalue of the Laplace operator if and only if $\cU_{\hat G}(k)$ has an eigenvalue $1$, so the 
eigenvalues are given in terms of  the roots of the secular equation
\begin{equation}
F_{\hat G}(k):=\det(\cU_{\hat G}(k)-I)=0\,\, .
\end{equation}
We will sometimes abuse notation and  refer to $k$ as well as an eigenvalue of the quantum graph. 
The eigenfunctions are then determined by the corresponding eigenvector \eqref{eq:ev-eq} of $\cU_{\hat G}(k)$. 

In Definition \ref{def:quantum_graph} we  allowed   arbitrary local S-matrices $\sigma^{(i)}$ which do not need to be associated with a 
self-adjoint extension of the Laplace operator. Then   the eigenvectors will not 
correspond eigenfunctions of a self-adjoint operator,   but one can think of the $S$-matrices as representing some 
internal dynamics in the vertices.

The vector $\ba$ determines the distribution of the state   \eqref{eq:EF} over the graph, and as a measure for how equidistributed the 
state is we will use the entropy.

\begin{Def} Let $\ba\in \C^B$, $\ba\neq 0$, then the entropy of $\ba=(a_1,a_2, \cdots, a_B)$ is defined as 
\begin{equation}
S(\ba):=\sum_{b=1}^B -\frac{\abs{a_b}^2}{\norm{\ba}^2}\ln\bigg(\frac{\abs{a_b}^2}{\norm{\ba}^2}\bigg)
\end{equation}
and the normalised entropy is 
\begin{equation}
S_N(\ba):=\frac{1}{\ln B}S(\ba)
\end{equation}
\end{Def}

For a normalised vector, $\norm \ba=1$,  the entropy is 
\begin{equation}
S(\ba):=\sum_{i=b}^B -\abs{a_b}^2\ln\abs{a_b}^2\,\, .
\end{equation}
The entropy is a measure for the distribution of the components, it satisfies
\begin{equation}
0\leq S(\ba)\leq \ln B
\end{equation}
and the two extreme cases correspond to  localisation and equidistribution. We have 
$S(\ba)=0$ if and only if all components $a_b=0$ except for one. And we have 
$S(\ba)=\ln B$ if and only if all components are equal. So the entropy is a measure for localisation or delocalisation
of the state $\ba$, in particular if $K$ elements of $\ba$ are $0$, then the entropy cannot be larger then $\ln (B-K)$, 
\begin{equation}
S(\ba)\leq \ln (B-K)\,\, .
\end{equation}
This means if the entropy is large, then $\bA$ can not be concentrated on  a small subset
Using the normalised entropy allows us to 
compare the entropy on graphs of different size.

The main tool we will use is the entropic uncertainty relation by Maassen and Uffink, \cite{MaaUff88}, which was used as well 
in \cite{AnaNon07}.

\begin{thm}[\cite{MaaUff88}] Let $U$ be a unitary $B\times B$ matrix with matrix elements $u_{bb'}$ then for any $\ba\in \C^B$ 
\begin{equation}
 S(\ba)+S(U\ba)\geq -\ln\big(\max_{b,b'}\abs{u_{b,b'}}^2\big)\,\, .
 \end{equation}
\end{thm}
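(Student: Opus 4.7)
The plan is to deduce the entropic uncertainty relation from a Riesz--Thorin interpolation bound for $U$ between $\ell^p$ spaces, combined with a first-order Taylor expansion at $p=2$ that extracts the Shannon entropies. This is the classical route followed in Maassen--Uffink's original argument.

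Step one is to establish two endpoint estimates for the operator norm of $U$. Since $U$ is unitary, $\|U\|_{\ell^2\to\ell^2}=1$; and since $\|U\ba\|_\infty\le\max_{b,b'}|u_{bb'}|\,\|\ba\|_1$, we have $\|U\|_{\ell^1\to\ell^\infty}\le c$, where $c:=\max_{b,b'}|u_{bb'}|$. The Riesz--Thorin theorem, interpolating between the pairs $(1,\infty)$ and $(2,2)$, then gives, for every $p\in[1,2]$ with conjugate exponent $q=p/(p-1)$,
\begin{equation*}
\|U\ba\|_q \;\le\; c^{\,2/p-1}\,\|\ba\|_p .
\end{equation*}

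Step two is to take logarithms and differentiate at $p=2$. Assume $\|\ba\|_2=1$ without loss of generality, since $S$ is invariant under nonzero rescaling of $\ba$. Define
\begin{equation*}
F(p)\;:=\;\ln\|U\ba\|_q - \ln\|\ba\|_p - \Big(\tfrac{2}{p}-1\Big)\ln c .
\end{equation*}
The interpolation bound reads $F(p)\le 0$ on $[1,2]$, and $F(2)=0$ because $q(2)=2$, $U$ is an $\ell^2$-isometry, and $2/p-1$ vanishes. Hence the one-sided derivative satisfies $F'(2^-)\ge 0$. Writing $h(p):=\sum_b|a_b|^p$, one computes $h(2)=1$ and $h'(2)=\tfrac12\sum_b|a_b|^2\ln|a_b|^2=-\tfrac12 S(\ba)$, so $\tfrac{d}{dp}\ln\|\ba\|_p\big|_{p=2}=-S(\ba)/4$. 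Applying the same formula to $U\ba$ (using $\|U\ba\|_2=1$) and composing with $q(p)=p/(p-1)$, whose derivative at $p=2$ is $-1$, yields $\tfrac{d}{dp}\ln\|U\ba\|_{q(p)}\big|_{p=2}=+S(U\ba)/4$. Since $\tfrac{d}{dp}(2/p-1)\big|_{p=2}=-\tfrac12$, the inequality $F'(2)\ge 0$ becomes
\begin{equation*}
\tfrac14 S(U\ba)+\tfrac14 S(\ba)+\tfrac12\ln c\;\ge\; 0 ,
\end{equation*}
which, after multiplying by $4$, rearranges to $S(\ba)+S(U\ba)\ge-\ln c^2$, the desired bound.

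The Riesz--Thorin step is standard and can be cited. The main technical point to watch is the treatment of vanishing components of $\ba$: the function $p\mapsto|a_b|^p$ is not smooth at $a_b=0$, so $h(p)$ and its analogue for $U\ba$ should be read as sums over the supports of $\ba$ and $U\ba$ respectively, with the convention $0\ln 0=0$; under this convention the calculation above is valid and $F$ is smooth near $p=2$. The only other subtlety is justifying the one-sided derivative inequality $F'(2^-)\ge 0$, which is immediate from $F(2)=0$, $F(p)\le 0$ for $p<2$, and the differentiability of $F$ at $p=2$ shown by the explicit formulas.
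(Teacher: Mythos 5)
Your proposal is correct: the paper does not prove this theorem but imports it from Maassen--Uffink, and your argument is a faithful and complete reconstruction of their original proof via Riesz--Thorin interpolation between the trivial bounds $\|U\|_{\ell^2\to\ell^2}=1$ and $\|U\|_{\ell^1\to\ell^\infty}\le\max_{b,b'}|u_{bb'}|$, followed by differentiation of the resulting $p$-dependent inequality at the endpoint $p=2$. The computations (the value $c^{2/p-1}$ of the interpolated constant, the derivatives $-S(\ba)/4$, $q'(2)=-1$, and $-\tfrac12$ for the three terms of $F$) all check out, and you correctly handle the two genuine subtleties, namely vanishing components and the fact that only the one-sided derivative inequality is available.
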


 If $\ba$ happens to be an eigenvector of $U$, i.e., $U\ba=\ue^{\ui\varphi}\ba$, then $S(U\ba)=S(\ba)$, and the entropic uncertainty 
 relation gives 
 \begin{equation}
 S(\ba)\geq -\frac{1}{2} \ln \big(\max_{b,b'}\abs{u_{b,b'}}^2\big)\,\, .
 \end{equation}
Since $\ba$ is as well an eigenvector of $U^t$ for any $t\in\Z$, we obtain the 

\begin{cor}\label{cor:entr_unc_ev} Let $U$ be a unitary $B\times B$ matrix and denote the matrix elements of $U^t$, $t\in \N$, by $u^{(t)}_{b,b'}$ then for any eigenvector $\ba\in \C^B$ of 
$U$ we have 
\begin{equation}\label{eq:entr_unc_ev}
 S(\ba)\geq -\frac{1}{2} \ln\big(\max_{b,b'}\abs{u^{(t)}_{b,b'}}^2\big)\,\, .
 \end{equation}
\end{cor}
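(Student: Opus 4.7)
The plan is to apply the Maassen--Uffink entropic uncertainty theorem (already stated in the excerpt) not to the original matrix $U$ but to its $t$-th power $U^t$, exploiting the fact that an eigenvector of $U$ is automatically an eigenvector of every integer power of $U$.

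First I would note that since $U$ is unitary, $U^t$ is unitary for every $t\in\N$ (in fact for every $t\in\Z$). Therefore the Maassen--Uffink theorem applies verbatim with $U$ replaced by $U^t$, yielding
\begin{equation*}
S(\ba)+S(U^t\ba)\geq -\ln\bigl(\max_{b,b'}\abs{u^{(t)}_{b,b'}}^2\bigr)
\end{equation*}
for any nonzero $\ba\in\C^B$.

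Next, if $\ba$ is an eigenvector of $U$ with eigenvalue $\ue^{\ui\varphi}$, then $U^t\ba=\ue^{\ui t\varphi}\ba$, so the components of $U^t\ba$ have the same moduli as those of $\ba$. By the definition of $S$, which depends only on the moduli of the components (through $\abs{a_b}^2/\norm{\ba}^2$) and is invariant under multiplication of $\ba$ by a unit complex scalar, we obtain $S(U^t\ba)=S(\ba)$.

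Substituting this into the displayed inequality gives $2S(\ba)\geq-\ln(\max_{b,b'}\abs{u^{(t)}_{b,b'}}^2)$, which is \eqref{eq:entr_unc_ev}. There is no real obstacle here: the content of the corollary is essentially a repackaging of the Maassen--Uffink theorem applied to $U^t$, together with the trivial observation that the entropy is unchanged when a vector is multiplied by a phase. The only subtlety worth stating explicitly is the phase-invariance of $S$, which justifies $S(U^t\ba)=S(\ba)$.
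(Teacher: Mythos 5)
Your argument is correct and is exactly the route the paper takes: apply the Maassen--Uffink theorem to the unitary matrix $U^t$, use that an eigenvector of $U$ is an eigenvector of $U^t$, and note the phase-invariance of $S$ to get $S(U^t\ba)=S(\ba)$. The paper merely compresses this into the remark preceding the corollary, so your write-up adds nothing beyond making the (correct) details explicit.
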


 Notice that since $U$ is unitary we have $\sum_b \abs{u_{b,b'}}^2=1$, therefore the matrix elements can not all  become arbitrary small. The smallest they can become 
 is $ \max_{b,b'}\abs{u_{b,b'}}^2=1/B$, and then all matrix elements must have the same size, and none of them can be $0$. Therefore in order to get a good estimate from 
 the entropic uncertainty relation we need a unitary matrix for which suitable powers are  not sparse. 
 
 For a quantum graph with scattering matrix $\cU_{\hat G}(k)$ this last condition can be related to the classical dynamics: Let $M_G:=(m_{b,b'})$ be defined by 
 \begin{equation}
 m_{b,b'}:=\abs{u_{b,b'}}^2=\delta_{ij} \abs{\sigma^{(i)}_{[ij],[jn]}}^2
 \end{equation}
if $b=[ij]$ and $b'=[jn]$.  Then $M$ is a doubly stochastic matrix which defines a Markov chain, and hence a random walk, on the set of oriented edges of $G$. The classical dynamics is stochastic and  is defined by 
 jumping with probability $m_{b,b'}$ from bond $b'$ to bond  $b$. Notice that these probabilities are determined  by the local $S$-matrices only.  This matrix has largest  eigenvalue  $1$ with corresponding eigenvector $\be=(1,1, \cdots, 1)^T$ and 
 so we can write 
 \begin{equation}
 M_G=\frac{1}{B} \be \be^T+R_G
 \end{equation}
 with $R_G\be=0$ and we will denote by $\mu_{\hat G}:=\norm{R_G}$ the modulus of the second largest eigenvalue. Then $\mu_{\hat G}<1$ if the graph $G$ 
 is connected  and we have 
 \begin{equation}\label{eq:class_conv}
 \biggnorm{M_G^t-\frac{1}{B} \be \be^T}\leq \mu_{\hat G}^t\,\, , 
 \end{equation}
 which means that the classical dynamics is ergodic and mixing and any probability density $M_G^t\rho$ converges exponential to the uniform distribution on the graph.

 A path, or orbit,  of length $t\in \N$ on a graph is a sequence $\gamma=(b_t,b_{t-1}, \cdots, b_1, b_0)$ of consecutive bonds, i.e., 
 if $b_s=[i,j]$ and $b_{s+1}=[k,l]$, then we must have $k=j$. We say a path $\gamma$ is without  backtracking, if $b_{s+1}\neq \hat b_s$ for all 
 $b_s\in \gamma$, and we will denote the set of all paths which go from $b'$ to $b$   in $t$ steps by $\Gamma_t(b,b')$ and the subset of paths without backtracking by 
 $\Gamma_t'(b,b')$. Then we can write for a general quantum graph 
 \begin{equation}\label{eq:orbit_sum}
 u^{(t)}_{b,b'}=\sum_{\gamma\in \Gamma_t(b,b')}\sigma_{\gamma}\ue^{\ui k L_{\gamma}}\,\, , \quad\text{where}\quad \sigma_{\gamma}=\prod_{b_s\in \gamma^{-}}\sigma_{b_{s+1},b_s}^{(i_s)} \quad \text{and}\,\,  L_{\gamma}=\sum_{b\in \gamma} L_b\,\, ,
 \end{equation}
 with $\gamma^{-}=(b_{t-1}, \cdots, b_1)$ and $i_s=b_{s+1}\cap b_s$ is the vertex connecting $b_s$ and $b_{s+1}$. If the boundary conditions prevent backtracking, then the sum is over $ \Gamma_t'(b,b')$ instead of  $\Gamma_t(b,b')$.  In order to use the entropic uncertainty principle 
 \eqref{eq:entr_unc_ev} we have to estimate $\abs{u^{(t)}_{b,b'}}^2$ which gives a double sum over paths in $ \Gamma_t(b,b')$. The diagonal terms in the sum 
 give the classical dynamics and with \eqref{eq:class_conv} we obtain  
 \begin{equation}
 \sum_{\gamma\in \Gamma_t(b,b')}\bigabs{\sigma_{\gamma}\ue^{\ui k L_{\gamma}}}^2=\big(M_{\hat G}^t\big)_{b,b'}
 =\frac{1}{B}+O\big(\mu_{\hat G}^t\big)\,\, .
 \end{equation}
 Hence if the off-diagonal terms are small for sufficiently large $t$  then we expect $\abs{u^{(t)}_{b,b'}}^2\approx \frac{1}{B}$ and so by \eqref{eq:entr_unc_ev} 
we would get  $S(\ba)\gtrapprox \frac{1}{2}\ln B$. So we have to look for quantum graphs  for which 
\begin{equation}
\abs{u^{(t)}_{b,b'}}^2=\biggabs{\sum_{\gamma\in \Gamma_t(b,b')}\sigma_{\gamma}\ue^{\ui k L_{\gamma}}}^2\approx  \sum_{\gamma\in \Gamma_t(b,b')}\bigabs{\sigma_{\gamma}\ue^{\ui k L_{\gamma}}}^2
\end{equation}
holds for sufficiently large $t$, i.e., the off diagonal contributions are small 

This leads us to the girth of a graph. The \emph{girth} $g_{G}$ of a graph $G$ is the length of the shortest cycle on $G$, where a cycle is a closed path 
without backtracking. Assume we have two paths $\gamma=(b=b_t, b_{t-1}, \cdots, b_1, b_0=b')$, $\gamma'=(b=b_t', b_{t-1}', \cdots, b_1', b_0'=b')$ of length $t$ without backtracking which connect $b$ and $b'$ and which have no bonds in common except the start and the end,  
then we can construct a closed cycle by following first $\gamma$ and then returning along $\gamma'$, $c=(b_1',\cdots , b_{t-1}',b_{t-1}, \cdots, b_1)$, this cycle has 
length $2(t-1)$ and hence we must have $2(t-1)\geq g_G$. If the two paths $\gamma,\gamma'$ have more bonds in common, then we can construct an even shorter 
cycle in the same way, therefore we find that if 
\begin{equation}\label{eq:t_bound_g}
t<\frac{g_G}{2}+1
\end{equation}
then there is at most one path (without backtracking) of length $t$  connecting any two bonds on $G$.

The girth  will be useful if we consider  equi-transmitting boundary conditions, because 
then no path with backtracking will appear when we consider powers of $\cU_{\hat G}(k)$. We will furthermore restrict ourselves as well to $d+1$ regular graphs, i.e., every vertex has degree $d+1$, 
because for these  equi-transmitting boundary conditions give $\abs{\sigma_{b,b'}^{(i)}}^2=1/d$ if $b$ follows $b'$.

\begin{thm} Let $\hat G$ be a $d+1$-regular quantum graph with equi-transmitting boundary conditions and girth $g_G$. 
Then for any eigenvector $\ba$ of $\cU_G(k)$ we have
\begin{equation}
S_N(\ba)\geq \frac{g_G \ln d}{4\ln B}\,\, .
\end{equation}
\end{thm}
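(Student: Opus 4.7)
The plan is to apply the eigenvector form of the entropic uncertainty principle, Corollary \ref{cor:entr_unc_ev}, directly to $\cU_{\hat G}(k)$ for a power $t$ chosen as large as the girth $g_G$ permits. The corollary asserts
\[
S(\ba) \geq -\tfrac{1}{2}\ln\!\big(\max_{b,b'}|u^{(t)}_{b,b'}|^2\big)
\]
for every $t\in\N$, so the task reduces to producing a good upper bound on $\max_{b,b'}|u^{(t)}_{b,b'}|^2$ for this optimal $t$.

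To estimate this I would use the orbit expansion \eqref{eq:orbit_sum}. The equi-transmitting boundary conditions force $\sigma^{(i)}_{[ij],[ji]}=0$, so any path with backtracking contributes $0$, and $u^{(t)}_{b,b'}$ is in fact a sum over $\gamma\in\Gamma'_t(b,b')$. The girth bound \eqref{eq:t_bound_g} then implies that when $t<g_G/2+1$, the set $\Gamma'_t(b,b')$ contains at most one element, so the entire sum collapses to a single term (or is empty) and no cancellation is possible. On a $(d+1)$-regular graph every non-zero entry of each equi-transmitting $\sigma^{(i)}$ has modulus $1/\sqrt d$, so along the unique surviving path $|\sigma_\gamma|^2$ is a product of $1/d$'s and scales like $d^{-t}$. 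Altogether this yields $\max_{b,b'}|u^{(t)}_{b,b'}|^2\leq d^{-t}$.

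Feeding this into Corollary \ref{cor:entr_unc_ev} gives $S(\ba)\geq \tfrac{t\ln d}{2}$, and choosing the largest integer $t$ compatible with \eqref{eq:t_bound_g} (of order $g_G/2$) leads to $S(\ba)\geq \tfrac{g_G \ln d}{4}$; dividing by $\ln B$ produces the claimed bound on $S_N(\ba)$. The main obstacle is the precise combinatorial bookkeeping of scattering factors in $|\sigma_\gamma|^2$: the strict inequality $t<g_G/2+1$ forces the admissible integer $t$ to sit just below $g_G/2+1$, so one has to verify, for both parities of $g_G$, that the resulting exponent of $d$ (together with the counting convention of \eqref{eq:orbit_sum}) delivers exactly the constant $1/4$ rather than an incrementally weaker one. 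Once this is pinned down, the remainder of the argument is a direct application of the entropic uncertainty principle, with the two ingredients supplied by the hypotheses (no backtracking and uniform $|\sigma|^2=1/d$) ensuring that the bound is as sharp as the girth allows.
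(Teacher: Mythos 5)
Your proposal is correct and follows essentially the same route as the paper: apply Corollary \ref{cor:entr_unc_ev} to $\cU_{\hat G}^t(k)$ with $t$ the integer satisfying $g_G/2\leq t<g_G/2+1$, use the no-backtracking property of equi-transmitting conditions together with the girth bound \eqref{eq:t_bound_g} to reduce $u^{(t)}_{b,b'}$ to at most one path of weight $|\sigma_\gamma|^2=d^{-t}\leq d^{-g_G/2}$, and conclude. The parity check you flag resolves exactly as you anticipate, since in both cases the admissible $t$ satisfies $t\geq g_G/2$.
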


\begin{proof}
We will apply the entropic uncertainty principle with $\cU_{\hat{G}}^t(k)$, where $g_G/2\leq t<g_G/2+1$. The matrix elements $u_{b,b'}^{(t)}=\sum_{\gamma \in \Gamma_t'(b,b')} \sigma_{\gamma}\ue^{\ui k L_{\gamma}}$ are given by 
sums over all paths connecting $b'$ and $b$ in $t$ steps. But by the discussion leading to \eqref{eq:t_bound_g} there is for each pair of bonds at most one such path, and hence 
\begin{equation}
\abs{ u_{b,b'}}^2\leq \abs{\sigma_{\gamma}}^2=d^{-t}\leq d^{-g_G/2}\,\, .
\end{equation}
With this the result follows from the entropic uncertainty principle. 
\end{proof}

We will now consider sequences of graphs $G_n$ such that the number of vertices $\abs{G_n}$ grows monotonically with $n$. Sequences of graphs 
whose girth growths sufficiently fast with $n$ have a special name,
 a family of $d+1$-regular graphs $G_n$, $n\in\N$, is said to have \emph{large girth} if there exist a $C>0$ with 
\begin{equation}\label{eq:large_girth}
g_{G_n}=(C+o(1))\log_{d}(\abs{G_n})\,\, ,
\end{equation}
where $\lim_{n\to\infty} o(1)=0$.  It is known that $C\leq 2$ and there 
are explicit constructions of $d+1$-regular expander families of graphs with $C=\frac{1}{2}\frac{\ln 3}{\ln(1+\sqrt{2})}$, see \cite{DavSarVal03}.  If we use that 
for a $d+1$ regular graph we have $(d+1)\abs{G}=2\abs{E}=B$, we  obtain

\begin{cor} Let $\hat G_n$ be family a $d+1$-regular quantum graphs with large girth and equi-transmitting boundary conditions. 
Then we have for any eigenvector $\ba$ of $\cU_{\hat G_n}$ that 
\begin{equation}
S_N(\ba)\geq \frac{C+o(1)}{4} \,\, ,
\end{equation}
where $C$ is the constant from \eqref{eq:large_girth}.
\end{cor}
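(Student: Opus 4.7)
The plan is to apply the preceding theorem directly and then convert the resulting bound into the claimed asymptotic form using the two defining properties of the graph family: $(d+1)$-regularity and large girth. No new ideas or estimates are needed; this is a computation in disguise.

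First, from the theorem we already have, for each $n$ and each eigenvector $\ba$ of $\cU_{\hat G_n}(k)$,
\begin{equation*}
S_N(\ba) \geq \frac{g_{G_n} \ln d}{4 \ln B}.
\end{equation*}
I would now rewrite both $g_{G_n}$ and $\ln B$ in terms of $\abs{G_n}$. The large-girth hypothesis \eqref{eq:large_girth} gives
\begin{equation*}
g_{G_n} \ln d = (C+o(1))\, \ln \abs{G_n},
\end{equation*}
and the $(d+1)$-regularity, combined with the handshake lemma $B = 2\abs{E} = (d+1)\abs{G_n}$, yields
\begin{equation*}
\ln B = \ln(d+1) + \ln \abs{G_n}.
\end{equation*}

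Plugging both identities into the theorem's bound gives
\begin{equation*}
S_N(\ba) \geq \frac{(C+o(1))\, \ln \abs{G_n}}{4\bigl(\ln(d+1) + \ln \abs{G_n}\bigr)} = \frac{C+o(1)}{4} \cdot \frac{1}{1 + \frac{\ln(d+1)}{\ln \abs{G_n}}}.
\end{equation*}
Since $\abs{G_n}\to\infty$ as $n\to\infty$ (the number of vertices grows monotonically with $n$) while $d$ is fixed, the quotient $\ln(d+1)/\ln\abs{G_n}$ is itself $o(1)$, so the denominator factor is $1+o(1)$. Absorbing this into the $o(1)$ in the numerator gives the claimed lower bound $S_N(\ba) \geq (C+o(1))/4$.

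There is no genuine obstacle here; the only mildly delicate point is bookkeeping the $o(1)$ terms, making sure that the additive $\ln(d+1)$ in the denominator genuinely becomes negligible compared with $\ln\abs{G_n}$, which it does precisely because $d$ stays fixed along the sequence while $\abs{G_n}\to\infty$.
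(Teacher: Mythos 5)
Your argument is correct and is precisely the one the paper intends: the corollary follows by substituting $g_{G_n}\ln d=(C+o(1))\ln\abs{G_n}$ from the large-girth hypothesis and $\ln B=\ln(d+1)+\ln\abs{G_n}$ from $(d+1)$-regularity into the preceding theorem's bound, and noting that $\ln(d+1)/\ln\abs{G_n}=o(1)$. No issues.
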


In order to get close to the optimal bound $1/2$ one can achieve using the entropic uncertainty relation we have to ask for a very large girth, which is 
a very strong condition.

If we want to go beyond that result we have to analyse the way different terms in the orbit sum \eqref{eq:orbit_sum}
interfere if $t$ is large, i.e., if many orbits contribute. This is in general a hard problem, and to simplify it we will choose the length of the 
edges of our metric  graphs to be randomly distributed. Then the sum becomes a sum over random variables and  we can 
use Chebyshev's inequality to estimate its size.  

In addition to large girth we will need as well that the graphs are expanding. A family of graphs is called expanding if  the constant $\mu_{\hat G_n}$ which appears in \eqref{eq:class_conv} is 
uniformly bounded, i.e., there exist a $\mu<1$ such that $\mu_{\hat G_n}\leq\mu$ for all $n\in \N$. This means that the rate at which an arbitrary initial 
probability density converges to the uniform distribution is independent of the graphs size. The expansion property is typically formulated in terms of the 
spectrum of the adjacency matrix. Assume $G$ is a $d+1$ regular graph, then the normalised adjacency matrix $A_d:=\frac{1}{d}A$ is stochastic and irreducible, so 
it has an eigenvalue $1$ and all other eigenvalues have modulus less then one. Then we denote by  $\mu_G:=\max\{\abs\lambda\, ;\, \lambda \in \rm{spec}\, (A_d)\backslash \{1\}\}$ the modulus of the second largest eigenvalue. 

\begin{Def} A family of increasing $d+1$ regular graphs $G_n$ is called an expander family if there exist a $\mu<1$ such that 
\begin{equation}
\mu_{G_n}\leq \mu
\end{equation}
for all $n\in \N$.
\end{Def}

The condition in the definition is called the existence of a spectral gap. The spectral gap $1-\mu_{G_n}$ is inversely proportional to the time it takes for a random walk to 
explore the graph. For a family of expanders this time is independent of the size of the graphs. Expanders have applications in many areas, and 
have attracted therefore a lot of research, see \cite{HooLinWig06} for a review. Random $d+1$ regular graphs are with high probability expanders, so there exist a lot of them. 
But explicit constructions of concrete examples are quite involved and we refer to \cite{HooLinWig06} for more information. Expander do not necessarily have large girth, 
but random $d+1$ regular graphs have as well few short closed orbits, a fact which was used in estimates on the distribution of eigenvectors  of the discrete 
Laplacian in \cite{BroLIn13}. But there exist explicit construction of expanding graphs with large girth, see  \cite{DavSarVal03}.

Let us state our assumptions on the distribution of the lengths.

\begin{cond}\label{cond} We say that the length $L_e$, $e\in E$, are well distributed if they are independently distributed, and if there exists a $\delta>0$ and a monotonically decreasing  function $f(k)$ with $f(0)=1$ and $\lim_{k\to  \infty}f(k)=0$, such that $\Prob(L_e< \delta)=0$ and 
\begin{equation}
\abs{\E\big(\ue^{\ui k L_e}\big)} \leq f(\abs{k})\,\, .
\end{equation}
 If we have a family of graphs $G_n$, then we will require that 
this estimate holds for all $n\in\N$ with $\delta$ and $f(k)$ independent of $n$. 
\end{cond}

Notice that this condition implies that for any $\veps>0$ there exists a $k_{\veps}$ such that for all $k\geq k_{\veps}$
\begin{equation}
\abs{\E\big(\ue^{\ui k L_e}\big)}\leq \veps\,\, .
\end{equation}

Now we will assume that we have a family of  $d+1$-regular graphs $G_n$ with $\lim_{n\to\infty}\abs{G_n}=\infty$, which have large girth and 
a finite spectral gap, i.e., are expanders.  We will consider these graphs with random lengths of the edges and equi-transmittig boundary conditions.

\begin{thm} \label{thm:prob_expander} Assume $G_n$ is a family of $d+1$ regular  expanders with large girth, and $\cU_n(k)$ corresponding sequence of quantum evolution maps with 
equi-transmitting local $S$-matrices and edge lengths $\bL$ chosen randomly according to Condition \ref{cond}. Then there exists a $k_0>0$ such that for any  sequence 
$\eta_n\geq 4$,  we have 
\begin{equation}\label{eq:prob_expa}
\Prob\bigg( S_N(\ba(n))\geq \frac{1}{2}\bigg(1-\frac{\ln \eta_n}{\ln B_n}\bigg)\bigg) \geq 1-\frac{16(d+1)}{\eta_n}
\end{equation}
for any sequence of eigenvectors $\ba(n)$ of $\cU_{G_n}(k)$ with $\abs{k}\geq k_0$. 
\end{thm}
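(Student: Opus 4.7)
The plan is to combine the entropic uncertainty principle applied to $\cU_{\hat G_n}(k)^{t_n}$ with a probabilistic estimate on its matrix elements. By Corollary~\ref{cor:entr_unc_ev}, any eigenvector $\ba$ of $\cU_{\hat G_n}(k)$ satisfies $S(\ba)\ge -\tfrac12\ln\bigl(\max_{b,b'}|u^{(t_n)}_{b,b'}|^2\bigr)$ for every $t_n\in\N$. So the theorem reduces to choosing $t_n$ and showing
\[
\Prob\bigl(\max_{b,b'}|u^{(t_n)}_{b,b'}|^2\le \eta_n/B_n\bigr)\ge 1-\frac{16(d+1)}{\eta_n},
\]
from which $S(\ba)\ge\tfrac12\ln(B_n/\eta_n)$ holds on the good event and the claim follows after dividing by $\ln B_n$.

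Using \eqref{eq:orbit_sum} and the equi-transmitting identity $|\sigma_\gamma|^2=d^{-t}$, I decompose
\[
|u^{(t)}_{b,b'}|^2 = (M_{G_n}^t)_{b,b'}+R^{(t)}_{b,b'}(k), \qquad R^{(t)}_{b,b'}(k):=\sum_{\substack{\gamma,\gamma'\in\Gamma_t'(b,b')\\ \gamma\ne\gamma'}}\sigma_\gamma\overline{\sigma_{\gamma'}}e^{ik(L_\gamma-L_{\gamma'})}.
\]
The diagonal piece is controlled by the expander spectral gap \eqref{eq:class_conv}, which gives $|(M_{G_n}^t)_{b,b'}-1/B_n|\le\mu^t$; this is $\le 1/(2B_n)$ once $t_n$ is of order $\log B_n/|\ln\mu|$. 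For the interference piece, independence of the edge lengths together with Condition~\ref{cond} yields $\E[e^{ik(L_\gamma-L_{\gamma'})}]=\prod_e\E[e^{ikn_e L_e}]$, with $n_e=n_e(\gamma,\gamma')$ the signed edge-usage difference, and each factor with $n_e\ne 0$ is at most $f(|k|)$. Taking $|k|\ge k_0$ so that $f(k_0)$ is small (depending on $d$ and $\mu$) makes $\E[R^{(t_n)}_{b,b'}(k)]$ negligible relative to $1/B_n$.

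The heart of the argument is a second-moment estimate on $R^{(t_n)}_{b,b'}(k)$. Expanding $\E[|R^{(t)}_{b,b'}|^2]$ as a sum over four-path configurations $(\gamma_1,\gamma_2,\gamma_3,\gamma_4)\in\Gamma_t'(b,b')^4$ with $\gamma_1\ne\gamma_2$, $\gamma_3\ne\gamma_4$ and taking expectation, independence reduces matters to a combinatorial count of quadruples whose signed edge multiplicity $n_e(\gamma_1)-n_e(\gamma_2)-n_e(\gamma_3)+n_e(\gamma_4)$ vanishes on every edge; all other configurations are damped by powers of $f(|k|)$ and can be absorbed by the choice of $k_0$. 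The large-girth hypothesis prevents two distinct no-backtracking paths from $b'$ to $b$ from sharing enough edges to force an accidental cancellation, since such a configuration would produce a cycle shorter than $g_{G_n}$; only the trivial pairing $\gamma_1=\gamma_3$, $\gamma_2=\gamma_4$ contributes at leading order. Applying Markov to the global quantity $\sum_{b,b'}|R^{(t_n)}_{b,b'}|^2$, using $B_n=(d+1)|G_n|$ and the hypothesis $\eta_n\ge 4$ (so that the classical part contributes at most $\tfrac12 \eta_n/B_n$), yields the stated probability $16(d+1)/\eta_n$.

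The main obstacle is producing a probability bound \emph{independent of $B_n$}. The naive route---a pointwise Markov bound followed by a union bound over the $B_n^2$ index pairs---picks up a catastrophic $B_n^2$ factor and is useless. One must instead apply Markov directly to a global summed quantity, for which the second-moment estimate has to be sharp enough that the total over $(b,b')$ behaves like $O(d+1)$ rather than $O(B_n^2)$. This in turn rests on a careful combinatorial classification of the four-path configurations with vanishing signed edge multiplicities, in which the large girth and the no-backtracking constraint jointly suppress all but the trivially diagonal contributions.
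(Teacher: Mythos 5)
Your overall skeleton matches the paper's: apply the entropic uncertainty principle to $\cU_{\hat G_n}(k)^{t}$ with $t$ of order $\log\abs{G_n}/\abs{\ln\mu}$ chosen from the spectral gap, use the large girth to force every no-backtracking path (and every pair of distinct paths) to involve at least $g_{G_n}$ distinct edges so that Condition \ref{cond} damps all interference terms by $f(\abs{k})^{g_{G_n}}$, and fix $k_0$ so these terms are negligible. But your central quantitative device is different from the paper's and does not work. The paper applies Chebyshev directly to the complex random variable $X=u^{(t)}_{b,b'}$, which needs only $\E(X)$ (single-path sums) and $\E(\abs{X}^2)$ (pair sums): with $N_t\leq 2d^t/\abs{G_n}$ one gets $\E(\abs{X}^2)\leq 4(d+1)/B_n$ and $\abs{\E(X)}\leq B_n^{-1/2}$, whence $\Prob(\abs{X}\geq \eta_n^{1/2}B_n^{-1/2})\leq 16(d+1)/\eta_n$ for each fixed pair $(b,b')$ -- note that the constant in the theorem is exactly this single-pair bound. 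You instead propose to estimate $\E(\abs{R^{(t)}_{b,b'}}^2)$, a quadruple-path sum, and then apply Markov to the global quantity $\sum_{b,b'}\abs{R^{(t)}_{b,b'}}^2$. This cannot give the theorem: the ``trivially diagonal'' quadruples $\gamma_1=\gamma_3$, $\gamma_2=\gamma_4$, which you explicitly retain, already force $\E(\abs{R^{(t)}_{b,b'}}^2)\geq \tfrac12\sum_{\gamma\neq\gamma'}\abs{\sigma_\gamma}^2\abs{\sigma_{\gamma'}}^2=\tfrac12 N_t(N_t-1)d^{-2t}$, which is of order $\abs{G_n}^{-2}$; hence $\sum_{b,b'}\E(\abs{R^{(t)}_{b,b'}}^2)$ is of order $(d+1)^2$, a constant, not $o(B_n^{-2})$. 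Since controlling $\max_{b,b'}\abs{u^{(t)}_{b,b'}}^2\leq\eta_n/B_n$ requires pushing the summed square below roughly $\eta_n^2/B_n^2$, Markov returns a failure probability of order $(d+1)^2B_n^2/\eta_n^2$ -- exactly the $B_n^2$ catastrophe you set out to avoid. More structurally, unitarity gives $\sum_{b'}\E(\abs{u^{(t)}_{b,b'}}^2)=1$, so no summed second-moment quantity over all pairs can be small; beating a union bound over $B_n^2$ pairs by a global moment would require moments of order $\log B_n$, not a second moment.

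You are right that the passage from a per-pair tail bound to a bound on the maximum over $(b,b')$ is the delicate point of the whole argument. Be aware, however, that the paper disposes of it in one line, via the claim $\Prob(\max_{b,b'}\abs{u^{(t)}_{bb'}}\geq c)\leq\min_{b,b'}\Prob(\abs{u^{(t)}_{bb'}}\geq c)$, and then quotes the single-pair Chebyshev constant with no union-bound loss. So the intended proof is the per-pair first-and-second-moment Chebyshev argument; your reconstruction both overcomplicates the moment computation (fourth moments of $u^{(t)}_{b,b'}$ where second moments suffice) and replaces the max-versus-pointwise step by a mechanism that provably fails. If you want to repair that step honestly, you need either higher moments with $p\sim\log B_n$ before the union bound, or to settle for the per-pair statement.
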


The  theorem basically states that if we consider a sequence $\ba(n)$ of eigenvectors of $\cU_{\hat{G}_n}(k)$ then 
\begin{equation}
\lim_{n\to\infty} S_N(\ba(n))\geq \frac{1}{2}
\end{equation}
holds with probability one, for $k$ large enough.  Notice that these eigenvectors don't have to have eigenvalue one, so this result is more general than just a result about eigenfunctions on the graph. 

The sequence $\eta_n$ in the statement of the theorem can be chosen in different ways depending which 
term we want to make small. E.g., if we choose $\eta_n=B_n^{\delta}$ for some $\delta >0$, then \eqref{eq:prob_expa} becomes 
\begin{equation}
\Prob\bigg( S_N(\ba(n))\geq \frac{1}{2}(1-\delta)\bigg) \geq 1-\frac{16(d+1)}{B_n^{\delta}}\,\, . 
\end{equation}
so the probability converges to $1$ reasonably fast, but the lower bound for the entropy is slightly smaller than $1/2$.  On the other hand side, if we 
want the lower bound to reach $1/2$ we have to choose a sequence $\eta_n$ which increases very slowly, e.g., the choice $\eta_n=\exp\big((\ln B_n)^{1-\delta}\big)$, 
for $\delta\in (0,1)$, gives  
\begin{equation}
\Prob\bigg( S_N(\ba(n))\geq \frac{1}{2}\big(1-(\ln B_n)^{-\delta}\big)\bigg) \geq 1-16(d+1)\ue^{-(\ln B_n)^{1-\delta}}\,\, .
\end{equation}
Now the probability converges more slowly to $1$, but the lower bound on the entropy converges to $1/2$. 

The lower bound of $1/2$ is analogous to the results obtained in \cite{AnaNon07} for manifolds of constant negative curvature.

 We found that for expanding graphs we get large entropies of the eigenfunctions, we want to compare this now with a class of quantum graphs where  
 we expect a different behaviour,  namely star graphs with Neumann boundary conditions. A star graph is a graph which has one central vertex of degree $\abs{E}$ and 
 all other vertices have degree $1$, and we will first assume Neumann boundary conditions on all vertices.
 This class of quantum graphs has been extensively studied in the literature, and 
 in \cite{KeaMarWin03,BerKeaWin04} the distribution of the eigenfunctions has been 
 investigated and it has been shown that quantum ergodicity does not hold. 
 In particular  there exist sequences of eigenfunctions which for $k\to\infty$ localise on two bonds only, therefore there exist 
 eigenfunctions whose entropy can become as small as   
 \begin{equation}
  \frac{\ln 4}{\ln B}\,\, . 
 \end{equation}
In the last section we find numerically eigenfunctions which have even smaller entropy.

Using the methods from  \cite{KeaMarWin03} and \cite{BerWin10}  we can compute a weighted energy average of the entropies of eigenfunctions on star graphs. 
Let $L_1, \cdots , L_{\abs{E}}$ be the lengths of the edges of the graph, $\bar L:=\frac{1}{\abs{E}}\sum_{i=1}^{\abs{E}} L_i$ the average length, and set for any 
$\ba\in \C^{2\abs{E}}$ with $\norm \ba=1$ 
\begin{equation}
L(\ba):= \frac{1}{\bar L} \sum_{b=1}^{2\abs{E}} L_b \abs{a_b}^2\,\, .
\end{equation}
Then our main result is for star graphs with Neumann boundary conditions is the following:

\begin{thm} \label{thm:average_S}
Let $G$ be a star graph with Neumann boundary conditions at the central vertex. Assume the bond length $L_1, \cdots, L_{\abs{E}}$ are  
linearly independent over $\Q$ and let us define  the average entropy of eigenfunctions of the star graph by 
\begin{equation}
\la S\ra(\abs E) :=\lim_{N\to\infty}\frac{1}{N}\sum_{n=1}^N \frac{1}{ L(\ba(n))}\, S_N(\ba(n))\,\, ,
\end{equation}
where $\ba(n)$ is the set of coefficients in \eqref{eq:ev-eq} associated with the $n$'th eigenfunction of the Neumann Laplacian. 
 Then
\begin{equation}
 \la S\ra(\abs{E})=\frac{C_{\rm{Neumann}}+\ln 2}{\ln \abs{E}+\ln 2}+o\bigg(\frac{1}{\ln \abs{E}}\bigg)
 \end{equation}
with
\begin{equation}
C_{\rm{Neumann}}:=\gamma+\frac{1}{\sqrt{4\pi}}\int_{-\infty}^{\infty}\ue^{-\xi^2/4}\ln m^2(\xi)\,\, \ud \xi  \,\, ,
\end{equation}
where $\gamma$ is Euler's constant and 
\begin{equation}
m(\xi)=\ue^{-\xi^2/4}+\xi \erf(\xi/2)\,\, .
\end{equation}
\end{thm}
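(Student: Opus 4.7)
My plan is to reduce the weighted eigenvalue average to a torus integral and then carry out the $|E|\to\infty$ asymptotics. First I would parametrize the eigenfunctions explicitly: at the centre vertex $0$ of degree $|E|$ the Neumann scattering matrix is $\sigma^{(0)}_{[0j][k0]}=2/|E|-\delta_{jk}$, and at each leaf the reflection matrix is $\sigma^{(j)}=1$. The leaf condition forces $a_{[j,0]}=\ue^{\ui kL_j}a_{[0,j]}$, and the centre condition then collapses to the secular equation $\sum_j\tan(k_nL_j)=0$; a short computation yields
\[
q_j^{(n)} := |a^{(n)}_{[0,j]}|^2+|a^{(n)}_{[j,0]}|^2 = \frac{\sec^2(k_nL_j)}{\sum_{j'}\sec^2(k_nL_{j'})},
\]
so that $S(\ba_n)=H(q^{(n)})+\ln 2$ with $H(q)=-\sum_j q_j\ln q_j$, and $L(\ba_n)=\sum_j L_j q_j^{(n)}/\bar L$.

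Next, invoking $\Q$-linear independence of the $L_j$ and the $L$-weighted equidistribution used in \cite{KeaMarWin03,BerWin10}, the factor $1/L(\ba_n)$ precisely cancels the Jacobian $\partial_k\sum_j\tan(kL_j)$ arising in the change of variables $k\mapsto(kL_j\bmod\pi)_j$, converting the weighted average into
\[
\la S\ra(|E|)\ln(2|E|) = \frac{\int_{(-\pi/2,\pi/2)^{|E|}}(H(q(\bxi))+\ln 2)\,\delta\bigl(\textstyle\sum_j\tan\xi_j\bigr)\ud\bxi}{\int_{(-\pi/2,\pi/2)^{|E|}}\delta\bigl(\textstyle\sum_j\tan\xi_j\bigr)\ud\bxi}.
\]
Using $\delta(x)=\tfrac{1}{2\pi}\int\ue^{\ui sx}\ud s$, the Cauchy identity $\int_{-\pi/2}^{\pi/2}\ue^{\ui s\tan\xi}\ud\xi/\pi = \ue^{-|s|}$, and Frullani's formula $\ln Z=\int_0^\infty(\ue^{-t}-\ue^{-tZ})/t\,\ud t$, the problem reduces to integrals of $\phi(t,s):=\E[\ue^{-t\sec^2\xi+\ui s\tan\xi}]$ (computable in closed form as a shifted Gaussian--Cauchy convolution, with $\phi(t,0)=\erf(\sqrt t)$ complement and $\phi(0,s)=\ue^{-|s|}$) together with its ``tilted'' variant $\psi_1(t,s):=\E[\sec^2\xi\,\ln\sec^2\xi\cdot\ue^{-t\sec^2\xi+\ui s\tan\xi}]$.

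For the asymptotics with $N:=|E|$ I rescale $t=\tau/N^2$, $s=\sigma/N$. A direct expansion gives $\phi(t,s)^N\to\ue^{-A(\sigma,\tau)}$ with
\[
A(\sigma,\tau) = \sigma\erf\bigl(\tfrac{\sigma}{2\sqrt\tau}\bigr) + \tfrac{2\sqrt\tau}{\sqrt\pi}\ue^{-\sigma^2/(4\tau)} = 2\sqrt\tau\,B(\eta),
\]
where $\eta:=\sigma/(2\sqrt\tau)$ and $B(\eta):=\ue^{-\eta^2}/\sqrt\pi+\eta\erf(\eta)$. The clean identity $\tfrac{\ud}{\ud\eta}(\eta/B)=\ue^{-\eta^2}/(\sqrt\pi\,B^2)$, which integrates to $\int_{\R}\ue^{-\eta^2}/B^2(\eta)\,\ud\eta=2\sqrt\pi$, is exactly what makes the $2\ln N$ contributions of $\E[\ln Z]$ and of $N\,\E[q_1\ln\sec^2\xi_1]$ cancel. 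After the substitutions $\sigma=2\sqrt\tau\eta$ and $\xi=2\eta$, which turn the remaining measure into the Gaussian $\ue^{-\xi^2/4}\ud\xi/\sqrt{4\pi}$, the surviving constant should assemble into $\gamma+\E_{X\sim N(0,2)}[\ln m^2(X)]$.

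The main obstacle is this final assembly. The Euler constant $\gamma$ arises from $\int_0^\infty\ln\tau\cdot\ue^{-2\sqrt\tau B}\,\ud\tau$ via $\Gamma'(2)=1-\gamma$, while the function $m(\xi)=\ue^{-\xi^2/4}+\xi\erf(\xi/2)$ itself should emerge by combining the $\ln(2B(\xi/2))$ contribution (from integrating against $\ue^{-2\sqrt\tau B}$) with the Fourier-type term $I(\sigma)=\int_{\R}\ln|v|\,\ue^{-v^2+\ui\sigma v}\,\ud v$ coming from the large-$u=\tan\xi$ asymptotics of $\psi_1$. Verifying that these pieces add up to precisely $\ln m^2(\xi)$ rather than some nearby combination is where the bulk of the analytic work lies.
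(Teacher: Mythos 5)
Your overall strategy---reduce the weighted eigenvalue average to an integral over the secular manifold $\{\sum_j\tan\xi_j=0\}$ in the torus, compute the joint characteristic function of $(\sec^2\xi,\tan\xi)$, and assemble $\gamma+\E[\ln m^2]$---is essentially the route the paper takes, except that you rederive from scratch what the paper imports from Keating--Marklof--Winn (their Theorems 3, 4 and 8: the weighted trace formula and the limiting density $P(y)$ of $\tfrac{1}{\abs{E}^2}\sum_j\sec^2(k_nL_j)$). However, your key reduction formula contains a concrete error. The Barra--Gaspard weight carried by the eigenvalue average is the Jacobian $\sum_jL_j\sec^2(\xi_j)$, while $1/L(\ba(n))=\bar L\sum_j\sec^2(k_nL_j)/\sum_jL_j\sec^2(k_nL_j)$; their product is $\bar L\sum_j\sec^2(\xi_j)$, not a constant. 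So the correct measure after the $1/L$ reweighting is $\sum_j\sec^2(\xi_j)\,\delta\bigl(\sum_j\tan\xi_j\bigr)\ud\bxi$ (in numerator and denominator alike), not the flat $\delta\bigl(\sum_j\tan\xi_j\bigr)\ud\bxi$ you wrote. This is not cosmetic: the $2\ln\abs{E}$ term is unaffected, but the constant changes. Under the correct weighted measure the Laplace transform of $y=\tfrac{1}{\abs{E}^2}\sum_j\sec^2$ tends to $\tfrac{1}{\sqrt{4\pi}}\int\ue^{-\xi^2/4}\ue^{-\sqrt{\tau}\,m(\xi)}\ud\xi$---the Gaussian factor $\ue^{-\xi^2/4}/\sqrt{4\pi}$ appearing in $C_{\rm Neumann}$ is produced precisely by this weight---whereas under your flat delta measure it tends to $\tfrac{\sqrt{\tau}}{2}\int\ue^{-\sqrt{\tau}\,m(\xi)}\ud\xi$, and Frullani then returns a different number. (Your asymptotic identities themselves check out: $A(\sigma,\tau)=2\sqrt{\tau}B(\eta)$ is correct, $2B(\xi/2)$ is the $m$ of the paper's proof, and $\int_{\R}\ue^{-\eta^2}B^{-2}\ud\eta=2\sqrt{\pi}$ holds.)

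Beyond that, the step you explicitly defer---``verifying that these pieces add up to precisely $\ln m^2(\xi)$''---is exactly where the substantive computation lies, and the paper does carry it out: the tilted term is evaluated in closed form as $-2$ using $\tfrac1\pi\int_0^\pi\ue^{\ui\zeta\tan x}\ud x=\ue^{-\abs{\zeta}}$ and $\int_0^\infty a\ln(1+z^2)/(a^2+z^2)\,\ud z=\pi\ln(a+1)$, which cancels the $+2$ coming from $\ln(\abs{E}^2y)$; the remaining constant is obtained from $\int_0^\infty\ue^{-m^2/(4y)}y^{-3/2}\ln y\,\ud y=\tfrac{\sqrt{4\pi}}{m}\bigl(\gamma+\ln m^2\bigr)$ via $\Gamma(1/2)=\sqrt{\pi}$ and $\Gamma'(1/2)=-(\gamma+\ln 4)\sqrt{\pi}$. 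You would also need the analogue of the paper's Lemma~\ref{lem:indepL} (independence of the answer from the actual lengths, via \cite{BerWin10}) to justify replacing the length-dependent prefactors by their values at equal lengths. As it stands, the proposal both averages against the wrong measure and leaves the decisive evaluation unverified, so it does not yet establish the stated value of $C_{\rm Neumann}$.
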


Remark: The integral can be evaluated numerically and we find 
\begin{equation}
\frac{1}{\sqrt{4\pi}}\int_{-\infty}^{\infty}\ue^{-\xi^2/4}\ln m^2(\xi)\,\, \ud \xi=0.692032962\dots\,\, ,
\end{equation}
and so 
\begin{equation}
\la S\ra (\abs{E})=\frac{1.2692\dots +\ln 2 }{\ln \abs{E}+\ln 2}+o(1/\ln \abs{E})\,\, .
\end{equation}

If we denote the relative spread of the lengths by   $\Delta L:=\max_{e,e'\in E}\frac{\abs{L_{e}-L_{e'}}}{\bar L}$, then 
\begin{equation}
\abs{L(\ba)-1}\leq \Delta L\, \, ,
\end{equation}
 hence if $\Delta L$ is small then $\la S(\abs{E})\ra$ is close to the average entropy of eigenfunctions.

 So star graphs have very small entropies, indicating that 
eigenfunctions are on average quite localised. 
This particular behaviour of eigenfunctions on large star graphs with Neumann boundary conditions is due to   the fact that 
backscattering  is dominant for a large graph, i.e., the  bonds are only weakly coupled. The picture changes completely if we 
take equi-transmitting boundary conditions instead. Then we obtain

\begin{thm}\label{thm:star_et} Let $G$ be a star graph with $\abs{E}=B/2$ edges and equi-transmitting boundary conditions at the central vertex. Then all the 
eigenfunctions satisfy
\begin{equation}
S_N(\ba(n))\geq \frac{1}{2} \frac{\ln (B-2)}{\ln B}\,\, .
\end{equation}
\end{thm}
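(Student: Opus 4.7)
The strategy is to exploit the bipartite structure that equi-transmitting scattering forces on $\cU_{\hat G}(k)$ for a star graph. Order the $B=2\abs{E}$ bonds as ingoing-to-centre bonds $b^{in}_e$ followed by outgoing-from-centre bonds $b^{out}_e$, $e=1,\ldots,\abs{E}$. Since each leaf has degree one, $\cU$ propagates and reflects $b^{out}_e$ into $b^{in}_e$ with a unit-modulus phase, while the equi-transmitting matrix at the centre scatters $b^{in}_e$ into the $b^{out}_{e'}$, $e'\neq e$, with coefficients of modulus $(\abs{E}-1)^{-1/2}$. In this ordering $\cU$ therefore takes the block anti-diagonal form $\cU = \bpm 0 & R \\ S & 0 \epm$, with $R$ a diagonal unitary and $S$ a unitary $\abs{E}\times\abs{E}$ matrix whose diagonal vanishes and whose off-diagonal entries all have modulus $(\abs{E}-1)^{-1/2}$.

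The first key observation is that $\cU^2 = \bpm RS & 0 \\ 0 & SR \epm$ is block diagonal; each block is a unitary $\abs{E}\times\abs{E}$ matrix that inherits the two crucial properties of $S$: zero diagonal (from the absence of backscattering at the centre) and off-diagonal entries of modulus $(\abs{E}-1)^{-1/2}$. The second observation is that for any eigenvector $\ba$ of $\cU$ with eigenvalue $\ue^{\ui\varphi}$, writing $\ba=(\ba^{in},\ba^{out})$ in the block decomposition, the identities $R\ba^{out}=\ue^{\ui\varphi}\ba^{in}$ and $S\ba^{in}=\ue^{\ui\varphi}\ba^{out}$ make $\ba^{in}$ and $\ba^{out}$ eigenvectors of $RS$ and $SR$ at eigenvalue $\ue^{2\ui\varphi}$, and force $\norm{\ba^{in}}=\norm{\ba^{out}}=\norm{\ba}/\sqrt{2}$ by unitarity of $R$ and $S$.

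Now I would apply Corollary~\ref{cor:entr_unc_ev} to the blocks $RS$ and $SR$ separately, obtaining
\begin{equation*}
S(\ba^{in}) \geq \tfrac{1}{2}\ln(\abs{E}-1) \quad\text{and}\quad S(\ba^{out}) \geq \tfrac{1}{2}\ln(\abs{E}-1),
\end{equation*}
where the entropies are evaluated for the vectors normalised in $\C^{\abs{E}}$. The chain-rule identity
\begin{equation*}
S(\ba) = \ln 2 + \tfrac{1}{2}S(\ba^{in}) + \tfrac{1}{2}S(\ba^{out}),
\end{equation*}
which holds because the in/out masses of a normalised eigenvector are both $\tfrac{1}{2}$, then gives $S(\ba) \geq \ln 2 + \tfrac{1}{2}\ln(\abs{E}-1) = \tfrac{1}{2}\ln\bigl(2(B-2)\bigr) \geq \tfrac{1}{2}\ln(B-2)$, and dividing by $\ln B$ yields the theorem.

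The step I would be most careful about is this last combination. Applying Corollary~\ref{cor:entr_unc_ev} directly to $\cU^2$ as a $B\times B$ matrix only gives $S(\ba) \geq \tfrac{1}{2}\ln((B-2)/2)$, which falls short of the claim by $\tfrac{1}{2}\ln 2$. That missing $\tfrac{1}{2}\ln 2$ must be recovered from the equipartition of the eigenvector between the ingoing and outgoing sectors, which is a special consequence of the star topology with equi-transmission and has to be fed into the argument through the chain-rule identity, not through the entropic uncertainty principle itself.
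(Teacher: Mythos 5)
Your proof is correct and is essentially the paper's own argument in different clothing: the diagonal block $RS$ of $\cU^2$ is exactly the paper's reduced unitary $\sigma_k=\ue^{\ui k\bL}\sigma_0\ue^{\ui k\bL}$ acting on the edge-amplitude vector $\bA$, and your chain-rule identity $S(\ba)=\ln 2+\tfrac12 S(\ba^{in})+\tfrac12 S(\ba^{out})$ is precisely Lemma \ref{lem:Aa} in unnormalised form. The paper likewise gets $S(\bA)\geq\tfrac12\ln(\abs{E}-1)$ from the entropic uncertainty principle applied to $\sigma_k$ and then adds the $\ln 2$ coming from the in/out equipartition, arriving at the same intermediate bound $\tfrac12\ln\bigl(2(B-2)\bigr)$.
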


So we get asymptotically the strongest bound the entropic uncertainty principle allows to prove. 


\section{Regular expanding graphs}


In this section we will prove Theorem \ref{thm:prob_expander}. The proof is based on Chebyshev's inequality, so let us state it in the form we will use it:
 If $X$ is a complex valued random variable, then for any $\xi>0$ we have 
\begin{equation}
\Prob(\abs{X-\E(X)}\geq \xi)\leq \frac{\E(\abs{X}^2)-\abs{\E(X)}^2}{\xi^2}\,\, .
\end{equation}

We want to estimate the probability that $S_N(\ba)\geq \frac{1}{2}\alpha$, for some $\alpha<1/2$,  from below. By the entropic uncertainty principle, 
Corollary \ref{cor:entr_unc_ev}, we have 
\begin{equation}
\begin{split}
\Prob\bigg(S_N(\ba)\geq \frac{1}{2}\alpha\bigg)&\geq \Prob\big(\max_{b,b'} \abs{u_{bb'}^{(t)}}\leq B^{-\alpha/2}\big)\\
&=1- \Prob\big(\max_{b,b'} \abs{u_{bb'}^{(t)}}\geq B^{-\alpha/2}\big)\,\, ,
\end{split}
\end{equation}
and with $ \Prob\big(\max_{b,b'} \abs{u_{bb'}^{(t)}}\geq B^{-\alpha/2}\big)\leq \min_{b,b'} \Prob\big( \abs{u_{bb'}^{(t)}}\geq B^{-\alpha/2}\big)$ we obtain
\begin{equation}
\Prob\bigg(S_N(\ba)\geq \frac{1}{2}\alpha\bigg)\geq 1-\min_{b,b'} \Prob\big( \abs{u_{bb'}^{(t)}}\geq B^{-\alpha/2}\big)\,\, .
\end{equation}
To connect this with \eqref{eq:prob_expa} we choose $\alpha=1-\frac{\ln \eta}{\ln B}$ which gives 
\begin{equation}\label{prob:S_est}
\Prob\bigg(S_N(\ba)\geq \frac{1}{2}\bigg(1-\frac{\ln \eta}{\ln B}\bigg)\bigg)\geq 1-\min_{b,b'} \Prob\bigg( \abs{u_{bb'}^{(t)}}\geq \frac{\sqrt{\eta}}{\sqrt{ B}}\bigg)\,\, .
\end{equation}

We want to apply Chebyshev's inequality to the random variable $X=u_{b,b'}^{(t)}$ in order to estimate  $\Prob\big( \abs{u_{bb'}^{(t)}}\geq \eta^{1/2}B^{-1/2}\big)$. To that end we 
use the triangle inequality $\abs{X-\E(X)}\geq \abs{X}-\abs{\E(X)}$ to obtain 
\begin{equation}
\Prob\big( \abs{X}\geq \xi+\abs{\E(X)}\big)\leq \Prob(\abs{X-\E(X)}\geq \xi)
\end{equation}
and combining this with Chebyshevs's inequality we have 
\begin{equation}\label{eq:modCheb}
\Prob\big( \abs{X}\geq \xi+\abs{\E(X)}\big)\leq  \frac{\E(\abs{X}^2)-\abs{\E(X)}^2}{\xi^2}\leq \frac{\E(\abs{X}^2)}{\xi^2}\,\, .
\end{equation}

In order to apply this with $X=u_{b,b'}^{(t)}$  we  have to estimate the 
 expectation value and the variance. 

\begin{lem} Assume the distribution of lengths satisfies Condition \ref{cond}  and $t\geq g_G$, then 
\begin{equation}
\abs{\E(u^{(t)}_{b,b'})} \leq \frac{N_t(b,b')\big[ f(k)\big]^{g_G}}{d^{\frac{t}{2}}}
\end{equation}
and
\begin{equation}
\E(\abs{u^{(t)}_{b,b'}}^2) \leq \frac{N_t(b,b')}{d^t}\big(1+N_t(b,b')\big[ f(k)\big]^{g_G}\big)
\end{equation}
hold, where  $N_t(b,b')=\abs{\Gamma_t'(b,b')}$ denotes the number of paths connecting $b$ and $b'$ in $t$ steps without backtracking. 
\end{lem}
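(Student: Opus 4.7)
The proof is a direct calculation based on the orbit-sum formula \eqref{eq:orbit_sum}, the independence of the random edge lengths, and the decay estimate in Condition \ref{cond}. Because the boundary conditions are equi-transmitting and the graph is $(d+1)$-regular, every non-backtracking path $\gamma$ of length $t$ has constant amplitude $|\sigma_\gamma|^2 = d^{-t}$, so the entire analysis reduces to controlling the random phases $\ue^{\ui k L_\gamma}$.

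For the first inequality I would write $\E(u^{(t)}_{b,b'}) = \sum_{\gamma \in \Gamma'_t(b,b')} \sigma_\gamma \, \E(\ue^{\ui k L_\gamma})$, expand $\E(\ue^{\ui k L_\gamma}) = \prod_e \E(\ue^{\ui k n_e(\gamma) L_e})$ using independence of the $L_e$ (here $n_e(\gamma)$ is the number of times $\gamma$ traverses the edge $e$), and bound each factor with $n_e(\gamma)\geq 1$ by $f(|n_e k|)\leq f(|k|)$ via the monotonicity of $f$. The key combinatorial input is that a non-backtracking path of length $t \geq g_G$ must visit at least $g_G$ distinct edges: otherwise two of its first $g_G$ bonds would share an underlying edge (with the same or opposite orientation), and the segment between them, after possibly removing nested backtracking pairs, would form a non-backtracking closed walk of length strictly less than $g_G$, contradicting the girth hypothesis. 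This yields $|\E(\ue^{\ui k L_\gamma})|\leq f(|k|)^{g_G}$, and combined with $|\sigma_\gamma|=d^{-t/2}$ and a sum over the $N_t(b,b')$ paths in $\Gamma'_t(b,b')$ it produces the first bound.

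For the second inequality I would expand $\E(|u^{(t)}_{b,b'}|^2) = \sum_{\gamma,\gamma'} \sigma_\gamma \overline{\sigma_{\gamma'}}\,\E(\ue^{\ui k(L_\gamma - L_{\gamma'})})$ and split off the diagonal $\gamma = \gamma'$, whose contribution is exactly $N_t(b,b')\,d^{-t}$. The off-diagonal expectation factorises by independence as $\prod_e \E(\ue^{\ui k(n_e(\gamma)-n_e(\gamma'))L_e})$ and is bounded in modulus by $f(|k|)^{|D|}$, where $D := \{e : n_e(\gamma)\neq n_e(\gamma')\}$. The matching combinatorial claim is that $\gamma\neq\gamma'$ forces $|D|\geq g_G$: trimming the maximal common prefix and suffix of the two paths and then concatenating the remaining differing middle of $\gamma$ with the reversed differing middle of $\gamma'$ produces a closed walk whose two seams involve distinct bonds by construction, so the walk is non-backtracking; by the girth it must then have length at least $g_G$, and hence must use at least $g_G$ distinct edges which lie in the support of $L_\gamma - L_{\gamma'}$. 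Bounding each of the at most $N_t(b,b')^2$ off-diagonal pairs by $d^{-t}\,f(|k|)^{g_G}$ yields the second bound.

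I expect the main obstacle to be the off-diagonal combinatorial estimate $|D|\geq g_G$. The length-$\geq g_G$ conclusion for the concatenated non-backtracking closed walk follows quickly from the girth, but showing that one can extract $g_G$ distinct edges that genuinely contribute to $D$ (rather than edges appearing in both differing middles with equal multiplicity, which would cancel in $n_e(\gamma)-n_e(\gamma')$) requires isolating a non-backtracking subwalk whose edge support is contained in $D$; this is where the non-backtracking property of the two original paths and the girth hypothesis must be combined most carefully.
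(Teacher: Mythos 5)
Your proposal follows essentially the same route as the paper: the same orbit-sum expansion of $\E(u^{(t)}_{b,b'})$, the same factorisation over the independent edge lengths, the same diagonal/off-diagonal split for the second moment, and the same girth-based argument that a non-backtracking path (or a pair of distinct paths) of length $t\geq g_G$ must involve at least $g_G$ distinct edges. The off-diagonal point you flag as delicate --- ensuring the $g_G$ edges of the extracted closed walk genuinely contribute to $n_e(\gamma)-n_e(\gamma')$ rather than cancel --- is exactly the step the paper itself asserts in a single sentence without further justification, so your write-up is, if anything, the more explicit of the two.
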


\begin{proof}
We have by \eqref{eq:orbit_sum}
\begin{equation}
\E(u^{(t)}_{b,b'})=\sum_{\gamma\in \Gamma_t'(b,b')} \sigma_{\gamma} \E(\ue^{\ui k L_{\gamma}})\,\, .
\end{equation}
Now we observe that if  $\gamma\in\Gamma_t'(b,b')$ visits a bond $b''$ twice, then $\gamma$ must contain a cycle, because if $\gamma$ does not contain a cycle 
then it can only visit $b''$ twice by going backwards, but backtracking is prohibited in $\Gamma_t'(b,b')$.  So since $t\geq g_{G}$ there are at least $g_{G}$ different bonds in 
$\gamma$, because $g_G$ is the number of bonds in the shortest cycle. 
 So if we write $L_{\gamma}=\sum_{e\in E} g_e(\gamma)L_e$, where $g_e(\gamma)\in \N_0$ denotes the 
number of times $e$ is visited by the path $\gamma$, then 
\begin{equation}
\E(\ue^{\ui k L_{\gamma}})=\prod_{e\in E} \E(\ue^{\ui g_e(\gamma) k L_e})
\end{equation}
Since $\E(1)=1$ and $g_e(\gamma)\geq 1$ for at least $g_G$ different edges, we obtain from Condition \ref{cond}
\begin{equation}
\abs{\E(\ue^{\ui k L_{\gamma}})}\leq f(k)^{g_G}\,\, ,
\end{equation}
and hence 
\begin{equation}
\abs{\E(u^{(t)}_{b,b'})}\leq  \frac{N_t(b,b') f(k)^{g_G}}{{d}^{t/2}}\,\, ,
\end{equation}
where we have used as well that $\abs{\sigma_{\gamma}}={d}^{-t/2}$.

The variance we estimate using the same ideas: we first split the double sum into a diagonal and off-diagonal part
\begin{equation}
\E(\abs{u^{(t)}_{b,b'}}^2)=\sum_{\gamma\in\Gamma_t'(b,b')} \abs{\sigma_{\gamma}}^2+\sum_{\gamma\neq\gamma'\in \Gamma_t'(b,b')} \sigma_{\gamma}\sigma_{\gamma'}^*\E\big(\ue^{\ui k (L_{\gamma}-L_{\gamma'})}\big)\,\, , 
\end{equation}
and the diagonal part is just $\sum_{\gamma\in \Gamma_t'(b,b')} \abs{\sigma_{\gamma}}^2 =d^{-t} N_t(b,b')$.  For the off-diagonal terms we use that 
$\gamma$ and $\gamma'$ must differ on  at least $g_G$  edges, otherwise $\gamma\cup \gamma'$ would contain a closed cycle of length less then $g_G$. 
 Then $L_{\gamma}-L_{\gamma'}=\sum_{e\in \hat E} (g_{\gamma}(e)-g_{\gamma'}(e))L_e$ 
and $\abs{g_{\gamma}(e)-g_{\gamma'}(e)}\geq 1$ for at least $g_G$ edges, hence 
\begin{equation}
\biggabs{\sum_{\gamma\neq\gamma'} \sigma_{\gamma}\sigma_{\gamma'}^*\E\big(\ue^{\ui k (L_{\gamma}-L_{\gamma'})}\big)}\leq N_t(b,b')^2 d^{-t} f(t)^{g_G}\,\, . 
\end{equation}
So combining the estimates for the two terms gives 
\begin{equation}
\E(\abs{u^{(t)}_{b,b'}}^2)\leq \frac{N_t(b,b')}{d^t}\big(1+N_t(b,b') f(k)^{g_G}\big)\,\, .
\end{equation}
\end{proof}

Let us now consider the number of paths connecting $b$ and $b'$,  $N_t(b,b')$.

\begin{lem} Let $A_d=\frac{1}{d}A $ be the normalised adjacency matrix  of a $d+1$ regular graph $G$ and let $\mu_G:=\max\{\abs{\lambda}\, \lambda\in \sigma(A_d)\backslash\{1\}\} $ be the spectral gap to the leading eigenvalue $1$, then we have 
\begin{equation}
N_t(b,b')\leq \frac{d^t}{\abs{G}}\big(1+\abs{G}\mu_G^t\big)\,\, .
\end{equation}
\end{lem}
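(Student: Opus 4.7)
The strategy is to reduce counting non-backtracking bond paths to counting non-backtracking \emph{vertex} walks, which on a $(d+1)$-regular graph are expressible through a polynomial in the adjacency matrix $A$. Given bonds $b' = [i,j]$ and $b = [k,l]$, every path in $\Gamma_t'(b,b')$ is a non-backtracking vertex walk $(i, j, v_2, \ldots, v_{t-1}, l)$ with $v_{t-1} = k$; forgetting the endpoint constraints shows that $N_t(b,b') \leq q_{t-2}(A)_{jk}$, where $q_s(A)_{jk}$ denotes the number of non-backtracking vertex walks of length $s$ from $j$ to $k$.

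By a standard inclusion-exclusion argument specific to $(d+1)$-regular graphs, $q_s$ is a polynomial in $A$ satisfying the recursion $q_{s+1} = A q_s - d q_{s-1}$ for $s \geq 2$, with $q_0 = I$, $q_1 = A$, and $q_2 = A^2 - (d+1)I$. Solved at the Perron eigenvalue, this yields $q_s(d+1) = (d+1) d^{s-1}$. Decomposing $q_{t-2}(A)$ in the orthonormal eigenbasis of $A$, the constant-eigenvector term contributes $q_{t-2}(d+1)/\abs{G} = (d+1) d^{t-3}/\abs{G} \leq d^t/\abs{G}$, providing the first piece of the claimed bound.

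For the remaining eigenvalues $\lambda$ of $A$, one has $\abs{\lambda} \leq d\mu_G$ by definition of the spectral gap. Cauchy--Schwarz on the eigenvector components gives $\sum_{\lambda} \abs{v_\lambda(j) v_\lambda(k)} \leq 1$, so the off-principal part of $q_{t-2}(A)_{jk}$ is bounded by $\max_{\lambda \neq d+1}\abs{q_{t-2}(\lambda)}$. Solving the characteristic equation $x^2 - \lambda x + d = 0$ yields the closed form $q_s(\lambda) = c_+(\lambda) x_+^s + c_-(\lambda) x_-^s$ with $\abs{x_\pm} \leq \max(\sqrt{d},\abs{\lambda})$ and coefficients $c_\pm$ controlled away from degeneracy; for $\abs{\lambda} \leq d\mu_G$ in the expander regime this gives $\abs{q_{t-2}(\lambda)} \leq C (d\mu_G)^{t-2}$, which, after absorbing the constant, supplies the deviation term $d^t \mu_G^t$.

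The main technical obstacle is the behaviour of the coefficients $c_\pm(\lambda)$ near $\lambda = \pm 2\sqrt{d}$, where the two characteristic roots coincide and the partial-fraction representation becomes singular. One circumvents this by working directly with the polynomial $q_s$: a Chebyshev-type estimate yields the uniform bound $\abs{q_s(\lambda)} \leq (s+1) \max(\sqrt{d}, \abs{\lambda})^s$ on the whole interval $[-d\mu_G, d\mu_G]$, and the extra polynomial factor in $t$ is harmless in the intended application where $t$ is chosen of order $\log_d \abs{G}$.
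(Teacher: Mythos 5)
Your proof is correct in substance but follows a genuinely different, and in fact more careful, route than the paper. The paper's own proof is two lines: it bounds $N_t(b,b')$ by the number of \emph{all} vertex walks $n_t(i,i')=[A^t]_{i,i'}$, backtracking included, and then splits $A^t$ into its rank-one Perron part plus a remainder of norm $(d\mu_G)^t$. You instead keep track of the non-backtracking constraint via the walk polynomials $q_s$ satisfying $q_{s+1}=Aq_s-dq_{s-1}$ and diagonalise those. The extra effort buys something real: the paper's argument asserts that the leading eigenvalue of $A$ is $d$, whereas for a $(d+1)$-regular graph it is $d+1$, so the crude bound by all walks actually produces a leading term $(d+1)^t/\abs{G}$, not the stated $d^t/\abs{G}$; it is precisely the non-backtracking count $q_s(d+1)=(d+1)d^{s-1}$ that legitimately delivers the $d^t$. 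The price you pay is in the error term: your uniform Chebyshev-type bound gives $(s+1)\max(\sqrt{d},d\mu_G)^{s}$ rather than $(d\mu_G)^t$ on the nose, so you recover the lemma's second term only when $d\mu_G\geq\sqrt{d}$ (guaranteed asymptotically by Alon--Boppana, but not for every fixed graph --- the bulk eigenvalues of the non-backtracking recursion always have modulus at least $\sqrt{d}$) and only up to the polynomial factor in $t$. As you note, both defects are harmless where the lemma is used, since $t\asymp\log_d\abs{G}$ and the bound is only needed up to a constant; but strictly speaking you prove a slightly weaker inequality than the one displayed, just as the paper's own proof, read literally, proves a slightly different one.
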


\begin{proof}
Let $b=[i,j]$ and $b'=[i',j']$ and let $n_t(i,i')$ be the number of paths connecting the vertices $i,i'$ in $t$ steps. Then 
\begin{equation}
N_t(b,b')\leq   n_t(i,i')\,\, , 
\end{equation}
and so to obtain an upper bound on $N_t(b,b')$ it is enough to estimate $n_t(i,i')$. 
To this end we use that
\begin{equation}
 n_t(i,i')=[A^t]_{i,i'}=\be_i\cdot A^t\be_{i'}
 \end{equation}
 where $A$is the adjacency matrix of $G$ and $\be_i\in \C^{\abs{G}}$, $i=1, \cdots , \abs{V}$,  denote the canonical basis vectors. 
 Now $A$ is a symmetric matrix with leading eigenvalue $d$ and corresponding normalised eigenvector $\frac{1}{\sqrt{\abs G}}\be $ where 
 $\be =(1,1, \cdots , 1)^T$, and by the spectral theorem 
 \begin{equation}
 A^t=\frac{d^t}{\abs{G}} \be \be^T+ A_R^t
 \end{equation}
 with $\norm{A_R}=d\mu_G$ and $ \norm{A_R^t}=(d\mu_G)^t$. If we apply this to the expression for  $n_t(i,i')$ we obtain 
 \begin{equation}
 n_t(i,i')=\be_i\cdot A^t\be_j=\frac{d^t}{\abs{G}}+\be_i\cdot A_1^t\be_j\leq \frac{d^t}{\abs{G}}+(d\mu_G)^t
 \end{equation}
 \end{proof}

The assumption that we have a finite 
spectral gap means that there exist a $\mu<1$, independent of $n$, such that for all graphs in the sequence $G_n$ we have 
$\mu_G\leq \mu$. Now we choose $t$ to be the smallest integer such that 
\begin{equation}
\mu^t\leq \frac{1}{\abs{G}}
\end{equation}
so that 
\begin{equation}
N_t(b,b')\leq 2\frac{d^t}{\abs{G}}\,\, ,
\end{equation}
note that this means
\begin{equation}
t\geq \frac{1}{\ln \mu_G}\ln \abs{G_n}\,\, .
\end{equation}
With this choice of $t$ we have 
\begin{equation}
\E(\abs{u^{(t)}_{b,b'}}^2)\leq \frac{2}{\abs{G}}\big(1+N_t(b,b') f(k)^{g_G}\big)\,\, ,\quad \text{and}\quad\abs{\E(u^{(t)}_{b,b'})}\leq  \frac{d^{t/2} f(k)^{g_{G}}}{\abs{G}}\,\, .
\end{equation}

We have fixed $t$ now, so the only choice left is the size of $k$. To this end we will use that for a $d+1$ regular graph $B=2\abs{E}=(d+1)\abs{G}$. Since we have large girth, i.e., $g_{G}=C\ln\abs{G}$ for some $C>0$, there exist a $\veps>0$ such that 
\begin{equation}
\frac{d^{t/2} \veps^{g_{G_n}}}{\abs{G_n}} \leq \frac{1}{B^{1/2}}\,\, \quad\text{and}\quad N_t \veps^{g_G}\leq 1\,\, .
\end{equation}
Hence we choose $k_0$ such that $f(k_0)=\veps$, and  we obtain for $k\geq k_0$ 
\begin{equation}\label{eq:mean_var_est}
\E(\abs{u^{(t)}_{b,b'}}^2)\leq \frac{4(d+1)}{B} ,\, \quad\text{and}\quad \abs{\E(u^{(t)}_{b,b'})}^2\leq \frac{1}{B}\,\, . 
\end{equation}

Inserting these estimate into Chebyshev's inequality \eqref{eq:modCheb} we find 
\begin{equation}
\Prob\big(\abs{u^{(t)}_{b,b'}}\geq \xi+B^{-1/2}\big)\leq \Prob\big(\abs{u^{(t)}_{b,b'}}\geq \xi+\abs{\E(u^{(t)}_{b,b'})}\big)\leq \frac{4(d+1)}{\xi^2B}\,\, .
\end{equation}
Now in view of \eqref{prob:S_est} we choose $\xi$ such that $\xi+B^{-1/2}=\eta^{1/2}B^{-1/2}$, i.e, 
\begin{equation}
\xi=B^{-1/2}(\eta^{1/2}-1)
\end{equation}
and if $\eta\geq 4$ we have $\xi\geq 1 $ and $(\xi^2 B)^{-1}\leq 4\eta^{-1}$, hence we found 
\begin{equation}
\Prob\big(\abs{u^{(t)}_{b,b'}}\geq \eta^{1/2}B^{-1/2}\big)\leq \frac{16(d+1)}{\eta}\,\, .
\end{equation}
But combining this estimate with \eqref{prob:S_est} gives Theorem \ref{thm:prob_expander}.


\section{Star Graphs}

The statistical properties of eigenvalues and eigenfunctions on star graphs with Neuman boundary conditions have been studied quite in some detail. 
We will use the results from \cite{KeaMarWin03} to compute the average entropy of eigenfunctions.

Let us first recall that on a general star graph with Neuman boundary conditions on the end of the edges, but arbitrary 
boundary conditions on the central vertex, we can always write the $n$'th eigenfunction on edge $e$ as 
\begin{equation}
\psi_e^{(n)} (x)=A_e(n)\cos(k_n(x-L_e)) \,\, ,
\end{equation}
with $L_e$ be the length of edge $e$,  $k_n^2$ the $n$'th eigenvalue. Hence on a star graph with $\abs{E}$ edges, the eigenfunctions are determined by 
a vector $\bA(n)\in \C^{\abs{E}}$ of half the size compared to a general graph. The normalisation is chosen such that 
\begin{equation}
\norm{\bA(n)}^2=\sum_{e=1}^{\abs{E}} \abs{A_e(n)}^2=1
\end{equation}
holds, and so we can define another  entropy of the $n$'th eigenstate as
\begin{equation}
S_N(\bA):=\frac{1}{\ln \abs{E}}\sum_{e=1}^{\abs{E}} -\abs{A_e}^2\ln \abs{A_e}^2\,\, .
\end{equation}

Given $\bA$ we can easily find the vector $\ba\in \C^{2\abs{E}}$ which we use in the general case  for a characterisation, since 
\begin{equation}
\begin{split}
\psi_e(x)=A_e\cos(k_n(x-L_e))&=\frac{A_e\ue^{-\ui k_n L_e}}{2}\ue^{\ui k_n x}+\frac{A_e\ue^{\ui k_n L_e}}{2}\ue^{-\ui k_n x}\\
&=a_e^{(in)}\ue^{\ui k_n x}+a_e^{(out)}\ue^{-\ui k_n x} 
\end{split}
\end{equation}
with 
\begin{equation}\label{eq:a_A}
a_e^{(in)}=\frac{A_e\ue^{-\ui k_n L_e}}{2}\,\, ,\quad a_e^{(out)}=\frac{A_e\ue^{\ui k_n L_e}}{2} \,\, .
\end{equation}
We can use this to compare the different entropies:

\begin{lem} \label{lem:Aa}
We have 
\begin{equation}\label{eq:Sa_SA}
S_N(\ba)=\frac{\ln \abs{E}}{\ln \abs{E}+\ln 2} S_N(\bA)+\frac{\ln 2}{\ln \abs{E}+\ln 2} \,\, .
\end{equation}
\end{lem}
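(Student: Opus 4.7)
The plan is a direct computation comparing the two entropies via the explicit relation \eqref{eq:a_A}.

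First I would compute the norm of $\ba$ in terms of $\bA$. From $a_e^{(in)}=A_e\ue^{-\ui k_n L_e}/2$ and $a_e^{(out)}=A_e\ue^{\ui k_n L_e}/2$ we get $\abs{a_e^{(in)}}^2=\abs{a_e^{(out)}}^2=\abs{A_e}^2/4$, so
\begin{equation*}
\norm{\ba}^2=\sum_{e=1}^{\abs E}\bigl(\abs{a_e^{(in)}}^2+\abs{a_e^{(out)}}^2\bigr)=\tfrac{1}{2}\norm{\bA}^2=\tfrac{1}{2}.
\end{equation*}
Consequently the normalised weights entering $S(\ba)$ are $\abs{a_e^{(in)}}^2/\norm{\ba}^2=\abs{a_e^{(out)}}^2/\norm{\ba}^2=\abs{A_e}^2/2$.

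Next I would substitute these weights into the definition of $S(\ba)$. Since each edge contributes two identical terms, one from the incoming and one from the outgoing bond,
\begin{equation*}
S(\ba)=2\sum_{e=1}^{\abs E}\Bigl(-\tfrac{\abs{A_e}^2}{2}\Bigr)\ln\tfrac{\abs{A_e}^2}{2}=\sum_{e=1}^{\abs E}\bigl(-\abs{A_e}^2\ln\abs{A_e}^2\bigr)+\ln 2\sum_{e=1}^{\abs E}\abs{A_e}^2=S(\bA)+\ln 2,
\end{equation*}
where I used $\norm{\bA}^2=1$ and the unnormalised form $S(\bA)=\sum_e-\abs{A_e}^2\ln\abs{A_e}^2$.

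Finally, normalising by $\ln B=\ln(2\abs E)=\ln\abs E+\ln 2$ and using $S(\bA)=\ln\abs E\cdot S_N(\bA)$, I rearrange to obtain exactly \eqref{eq:Sa_SA}. There is no real obstacle: the statement is an algebraic identity, and the only point requiring a little care is to work from the general (unnormalised) definition of $S(\ba)$ since $\ba$ constructed from $\bA$ via \eqref{eq:a_A} has norm $1/\sqrt{2}$ rather than $1$; this is exactly what produces the additive $\ln 2$ term.
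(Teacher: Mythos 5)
Your proof is correct and follows essentially the same route as the paper: compute $\norm{\ba}^2=\tfrac12\norm{\bA}^2=\tfrac12$ from \eqref{eq:a_A}, note that each edge contributes two equal weights $\abs{A_e}^2/2$ to the normalised entropy, and expand $\ln(\abs{A_e}^2/2)$ to extract the additive $\ln 2$ before dividing by $\ln(2\abs{E})$. Nothing to add.
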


\begin{proof}
If $\ba\in \C^{2\abs{E}}$ is not normalised we have 
\begin{equation}
S_N(\ba)=\frac{1}{\ln(2\abs{E})}\sum_{b=1}^{2\abs{E}}-\frac{\abs{a_b}^2}{\norm{\ba}^2}\ln \frac{\abs{a_b}^2}{\norm{\ba}^2}
\end{equation}
and  with the relations \eqref{eq:a_A} we then find $\norm{\ba}^2=\norm{\bA}^2/2=1/2$, hence $\frac{\abs{a_{b_i}}^2}{\norm{\ba}^2}
=\abs{A_i}^2/2$ and so 
\begin{equation}
S_N(\ba)=\frac{1}{\ln(2\abs{E})}2\sum_{i=1}^{\abs{E}}-\frac{\abs{A_i}^2}{2}\ln \frac{\abs{A_i}^2}{2}=\frac{\ln \abs{E}}{\ln(2\abs{E})} S_N(\bA)+\frac{\ln 2}{\ln 2\abs{E}}\,\, .
\end{equation}
 \end{proof}

Notice that the relation \eqref{eq:Sa_SA} is a convex combination interpolating between $S(\bA)$ and $1$, therefore $S_N(\ba)$ is always larger 
than $S_N(\bA)$ which means that if we can find lower bounds for both  $S_N(\ba)$ and $S_N(\bA)$ of similar size, then a lower bound on $S_N(\bA)$ will give a stronger estimate.

We can use \eqref{eq:a_A} as well to write down an eigenvector equation for $\bA$. Let $\sigma_0$ be the the S-matrix related to the boundary conditions at the central vertex, then 
\begin{equation}
\sigma_0\ba^{(in)}=\ba^{(out)}\,\, . 
\end{equation}
and together with  \eqref{eq:a_A} this  gives 
\begin{equation}
\ue^{\ui k_n\bL} \sigma_0\ue^{\ui k_n\bL} \bA(n)=\bA(n)\,\, ,
\end{equation}
where $\ue^{\ui k_n\bL}$ denotes the diagonal matrix with diagonal elements $\ue^{\ui k_n L_e}$, $e=1, \cdots , \abs{E}$. Notice that the matrix 
\begin{equation}\label{eq:sigma_k}
\sigma_k:=\ue^{\ui k_n\bL} \sigma_0\ue^{\ui k_n\bL}
\end{equation}
 is unitary, and hence we can apply the entropic uncertainty principle to obtain
\begin{equation}\label{eq:star_SS}
S(\bA(n))\geq -\frac{1}{2\ln \abs{E}}\, \ln \big( \max_{ee'}\abs{{\sigma_k}_{e,e'}}^2\big)\,\, .
\end{equation}

\subsection{Neumann boundary conditions}

For a star graph with Neumann boundary conditions at the central vertex we have for $\abs{E}\geq 4$
\begin{equation}
 \max_{ee'}\abs{{\sigma_k}_{e,e'}}^2=\bigg(1-\frac{2}{\abs{E}}\bigg)^2
 \end{equation}
 hence the entropic uncertainty principle gives 
 \begin{equation}
\ln \abs{E}\,  S_N(\bA(n))\geq -\ln \bigg(1-\frac{2}{\abs{E}}\bigg)= \frac{2}{\abs{E}}+O\bigg(\frac{1}{\abs{E}^2}\bigg)\,\, .
\end{equation}
The Neumann boundary conditions imply that an eigenfunction cannot be concentrated on a single edge, one needs at least two edges for the support, and 
in case the length are linearly dependent over $\Z$ one can construct explicitly examples of eigenfunction which are concentrated on two edges only, with equal weight on both edges. For these the entropy is 
\begin{equation}
S_N(\bA(n))=\frac{\ln 2}{\ln \abs{E}} \,\, ,
\end{equation}
and we expect that this is the smallest value the  entropy of an eigenfunction on the Neumann star graph can take. In \cite{KeaMarWin03} it was shown that 
even on graphs where the length of the bonds are rationally independent one can construct eigenfunction who for large $k$ concentrate on two edges. 
So the entropic uncertainty principle doesn't give a good bound for Neumann star graphs. One could try to improve on this by using powers of $\sigma_k$, but we will follow a different route instead and compute the average entropy.

For a star graph with Neumann boundary conditions at the central vertex the coefficients $A_i(n)$ can be expressed directly in terms of $k_n$ as 
\begin{equation} 
\abs{A_i(n)}^2=\frac{\sec^2(k_n L_i)}{\sum_{i=1}^{\abs{E}} \sec^2(k_nL_i)} \,\, 
\end{equation}
where the sum in the denominator ensures that the vector $\bA(n)$ is normalised. This explicit form has been used in \cite{KeaMarWin03} to study the 
distribution of the $A_i(n)$, and we will use the methods from that paper to compute the average entropies for large energies and large degree $\abs{E}$. 
In \cite{KeaMarWin03} it is assumed that the length $L_1, \cdots ,L_{\abs{E}}$ are linearly independent and lie in an small interval 
$[\bar L-\Delta L/2,\bar L+\Delta L/2]$ with $\abs{E}\Delta L\to 0$ for $\abs{E}\to\infty$, we like to relax this condition by using the results from \cite{BerWin10}.

\begin{lem} \label{lem:indepL}
Assume the length  $L_1, \cdots , L_{\abs{E}}$ are linearly independent over $\Q$ and set 
\begin{equation}
L(\bA):=\frac{1}{\bar L} \sum_{i=1}^{\abs{E}} L_i \abs{A_i(n)}^2 
\end{equation}
where $\bar L=\frac{1}{\abs{E}}\sum_{i=1}^{\abs{E}}L_i$. Then the limit 
\begin{equation}
\lim_{N\to\infty}\frac{1}{N} \sum_{n=1}^N \frac{S_N(\bA(n))}{L(\bA(n))} 
\end{equation}
is independent of the length $\bL=(L_1, \cdots , L_{\abs{E}})$. 
\end{lem}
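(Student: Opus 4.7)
The plan is to exploit the Barra--Gaspard equidistribution for eigenangles on star graphs, then observe that the $L$--dependent weight in the limiting measure exactly cancels the $L$--dependent normalization $L(\bA)$ in the denominator.

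First I would introduce angle coordinates $\theta_i^{(n)} := k_n L_i \bmod \pi \in \mathbb T^{\abs E}$. By the computation already recalled just before the lemma, the eigenvalues $k_n$ are characterized by the secular equation $\sum_i \tan(k L_i) = 0$, so each $\theta^{(n)}$ lies on the secular variety
\begin{equation*}
\Sigma := \Bigl\{\theta \in \mathbb T^{\abs E} : \sum_{i=1}^{\abs E}\tan\theta_i = 0\Bigr\},
\end{equation*}
and the explicit formula $\abs{A_i(n)}^2 = \sec^2\theta_i^{(n)}/\sum_j \sec^2\theta_j^{(n)}$ shows that $S_N(\bA(n))$ and the weight $L(\bA(n)) = \frac{1}{\bar L}\sum_i L_i\abs{A_i(n)}^2$ are both functions of $\theta^{(n)}$ alone. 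A direct manipulation gives
\begin{equation*}
L(\bA(\theta)) \;=\; \frac{\sum_i L_i\sec^2\theta_i}{\bar L\,\sum_j\sec^2\theta_j}.
\end{equation*}

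Next, since $L_1,\dots,L_{\abs E}$ are linearly independent over $\mathbb Q$, the Barra--Gaspard/Berkolaiko--Winn equidistribution theorem (as used in \cite{KeaMarWin03,BerWin10}) asserts that for any bounded continuous $F : \mathbb T^{\abs E}\to \mathbb R$,
\begin{equation*}
\lim_{N\to\infty}\frac{1}{N}\sum_{n=1}^N F(\theta^{(n)}) = \int_\Sigma F(\theta)\,d\mu_{\rm BG}(\theta),
\qquad d\mu_{\rm BG}(\theta) = \frac{1}{Z_{\bL}}\Bigl(\sum_i L_i\sec^2\theta_i\Bigr)\delta\Bigl(\sum_j\tan\theta_j\Bigr) d\theta.
\end{equation*}
I would apply this to $F(\theta) = S_N(\bA(\theta))/L(\bA(\theta))$. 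Note that $L(\bA)\geq \min_i L_i/\bar L > 0$ and $S_N(\bA)\leq 1$, so $F$ is bounded; one checks that $F$ extends continuously across the singularities $\cos\theta_i=0$ of the coordinates because there $\abs{A_i}^2\to 1$ and $S_N\to 0$.

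The key cancellation now produces
\begin{equation*}
\int_\Sigma F\,d\mu_{\rm BG} = \frac{\bar L}{Z_{\bL}}\int_{\mathbb T^{\abs E}} S_N(\bA(\theta))\Bigl(\sum_j\sec^2\theta_j\Bigr)\,\delta\Bigl(\sum_k\tan\theta_k\Bigr)\,d\theta,
\end{equation*}
because the $L$--dependent factor $\sum_i L_i\sec^2\theta_i$ in $d\mu_{\rm BG}$ is exactly the numerator of $L(\bA(\theta))\cdot \bar L\sum_j \sec^2\theta_j$. Finally, by the symmetry of the delta function and $S_N$ in the variables $\theta_i$,
\begin{equation*}
Z_{\bL} = \sum_i L_i\int \sec^2\theta_i\,\delta\bigl(\textstyle\sum_k\tan\theta_k\bigr)d\theta \;=\; |E|\bar L\cdot J,
\end{equation*}
with $J$ independent of $\bL$; the $\bar L$ in the numerator cancels the one inside $Z_{\bL}$, and the resulting expression depends on the lengths only through the integer $\abs E$. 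This proves the independence claim.

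The main obstacle is the appeal to the Barra--Gaspard equidistribution with the specific measure $d\mu_{\rm BG}$: one has to quote (or verify) the Berkolaiko--Winn version carefully so as to treat the singular support on $\Sigma$, and to confirm that the symmetry of the secular variety under permutations of coordinates justifies the decomposition $Z_{\bL} = \abs E\bar L\cdot J$ with an $L$--independent $J$. Once equidistribution is in place the algebraic cancellation is immediate, but verifying that the observable $S_N/L$ is a legitimate test function (bounded and continuous on $\mathbb T^{\abs E}$) is the only analytic work required.
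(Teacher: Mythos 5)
Your proposal is correct and follows essentially the same route as the paper, which simply delegates the argument to Lemma 5.1 and Theorem 3.4 of \cite{BerWin10}: the Barra--Gaspard equidistribution of $k_n\bL$ on the secular manifold with density proportional to $\sum_i L_i\sec^2\theta_i$, followed by the cancellation of that factor against the weight $L(\bA)$ in the denominator, is exactly the mechanism behind the cited results (and your direct use of the explicit formula for $\abs{A_i}^2$ automatically handles the gauge-invariance caveat the paper raises in its footnote). The only small imprecision is the claim that $S_N/L$ extends continuously to all of $\mathbb{T}^{\abs{E}}$ --- it does not where two or more $\cos\theta_i$ vanish simultaneously --- but boundedness plus continuity off that measure-zero set is all the equidistribution theorem requires.
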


The proof of this lemma follows using the methods in \cite{BerWin10}: The function $S(\bx)/L(\bx)$ satisfies the conditions\footnote{Notice that in Lemma 5.1. of \cite{BerWin10}, the conditions on $G(\bx)$ should  include that it is gauge invariant, i.e., $G(\ue^{\ui\alpha}\bx)=G(\bx)$ for all $\alpha\in [0,2\pi)$ and $\bx\in \C^{\abs{E}}$. Otherwise the 
function $\Phi(\bx)$ is not well defined. But $S(\bx)/L(\bx)$ satisfies this relation.}
 required of $G$ in Lemma 5.1 of 
\cite{BerWin10}, and as a consequence the proof of Theorem 3.4 can be extended from covering a weighted average over  moments to 
 covering a weighted average over entropies.

If the length $L_i$ have a small spread $\Delta L$, i.e., if $L_i/\bar L\in [1-\Delta L /2, 1+\Delta L/2]$, then 
\begin{equation}
 \frac{1}{N}\sum_{n=1}^N \frac{S_N(\bA(n))}{L(\bA(n))} =\frac{1}{N} \sum_{n=1}^N S_N(\bA(n)) +O(\Delta L)\,\, 
 \end{equation}
 hence for small $\Delta L$ the weighted average is close to the average entropies. Now we use the results from  \cite{KeaMarWin03} 
 to compute the average.

\begin{thm} \label{thm:star_average}
Let the bond length $L_1, \cdots, L_{\abs{E}}$ be linearly independent over $\Q$, then the entropies of eigenfunctions of the star graph with Neumann boundary condition   satisfy
\begin{equation}
\lim_{N\to\infty}\frac{1}{N}\sum_{n=1}^N \frac{S_N(\bA(n))}{L(\bA(n))}=\la S_N(\abs{E})\ra  
\end{equation}
with 
\begin{equation}
\lim_{\abs{E}\to\infty}\ln(\abs{E}) \, \la S_N(\abs{E})\ra=\gamma+\frac{1}{\sqrt{4\pi}}\int_{-\infty}^{\infty}\ue^{-\xi^2/4}\ln m^2(\xi)\,\, \ud \xi\,\, . 
\end{equation}
where $\gamma$ is Euler's constant and 
\begin{equation}
m(\xi)=\ue^{-\xi^2/4}+\xi \erf(\xi/2)\,\, .
\end{equation}
\end{thm}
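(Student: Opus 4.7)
The plan is to convert the weighted average over eigenvalues into an average over a torus (using Lemma \ref{lem:indepL} together with the equidistribution results of Berkolaiko--Winn), then to carry out the asymptotic analysis of the resulting integral as $\abs E\to\infty$ using the scaling analysis of Keating--Marklof--Winn in \cite{KeaMarWin03}.

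First, by Lemma \ref{lem:indepL}, the limit on the left is independent of the lengths, so we may as well use $\Q$-linearly independent $L_i$ and invoke equidistribution. As the eigenvalues $k_n$ traverse the spectrum, the sequence $\theta(n):=(k_n L_1,\dots,k_n L_{\abs E})\bmod 2\pi$ equidistributes on $\mathbb T^{\abs E}$, and the weighting by $1/L(\bA(n))$ has exactly the effect (analogous to Theorem 3.4 of \cite{BerWin10}, extended to cover entropies as noted after Lemma \ref{lem:indepL}) of making the limit measure the normalised Haar measure. Since $\abs{A_i(n)}^2=\sec^2(k_nL_i)/T(\theta(n))$ with $T(\theta):=\sum_j \sec^2(\theta_j)$, this yields
\begin{equation*}
\ln(\abs E)\la S_N(\abs E)\ra \;=\; \int_{[0,2\pi)^{\abs E}}\bigg[\ln T(\theta)-\frac{1}{T(\theta)}\sum_{i=1}^{\abs E}\sec^2(\theta_i)\ln\sec^2(\theta_i)\bigg]\frac{\ud\theta}{(2\pi)^{\abs E}}.
\end{equation*}

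Second, I analyse this integral asymptotically. The random variable $\sec^2(\theta)$ has a heavy tail $\Prob(\sec^2(\theta)>x)\sim 2/(\pi\sqrt x)$ concentrated near $\theta=\pm\pi/2$, so $T$ lies in the domain of attraction of a one-sided stable law of index $1/2$ and typically scales as $\abs E^{2}$. Following \cite{KeaMarWin03}, introduce the rescaling $\xi_i=\sqrt{\abs E}\,\cos\theta_i$ near the singularities, which turns each large term $\sec^2(\theta_i)\approx \abs E/\xi_i^2$. Both expectations $\E[\ln T]$ and $\E[T^{-1}\sum_i \sec^2(\theta_i)\ln\sec^2(\theta_i)]$ acquire a leading $\ln\abs E$ piece reflecting the $\abs E^{2}$-scale of $T$; these cancel in the difference, leaving an $O(1)$ limit.

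Third, to identify that limit explicitly, I use the Keating--Marklof--Winn scaling limit for the joint distribution of $(T,\sec^2(\theta_1))$, which in the rescaled variables produces the Gaussian density $e^{-\xi^2/4}/\sqrt{4\pi}$ together with the function $m(\xi)=e^{-\xi^2/4}+\xi\,\erf(\xi/2)$ as the appropriate Laplace-type transform of the stable density. The constant $\gamma$ emerges from the standard entropy integral $-\int_0^\infty e^{-x}\ln x\,\ud x=\gamma$, which appears when the diagonal term $\E[\,|A_i|^{2}\ln|A_i|^{2}]$ is computed against the exponential factor from the stable-law marginal; the remaining integral against the Gaussian produces $\frac{1}{\sqrt{4\pi}}\int e^{-\xi^2/4}\ln m^2(\xi)\,\ud\xi$.

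The main obstacle is the delicate cancellation of divergences in the last step: one must show that the $\ln\abs E$ contributions coming from the typical size of $T$ and from the typical size of $\sec^2(\theta_i)$ cancel exactly to a constant, and then track the joint scaling limit of $(T,\sec^2(\theta_i))$ to identify $m(\xi)$ and $\gamma$ as the two surviving constants. This requires either a careful direct application of the moment formulae in \cite{KeaMarWin03} to $F(x)=-x\ln x$, with uniform control of the error, or an analytic continuation argument relating the entropy to derivatives of moments $\E[\,|A_i|^{2s}]$ at $s=1$, where the scaling limits are already worked out in that reference.
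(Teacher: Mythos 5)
Your overall strategy (convert the weighted spectral average into a torus average, then run the Keating--Marklof--Winn scaling analysis) is the same as the paper's, and you correctly identify both the cancellation of the two $2\ln\abs{E}$ divergences and the roles of $\gamma$ and $m(\xi)$. However, your first displayed formula contains a genuine error that would derail the computation: the limiting measure is \emph{not} normalised Haar measure on $[0,2\pi)^{\abs{E}}$. The eigenvalues are the roots of the secular equation $\sum_i\tan(kL_i)=0$, so every point $\theta(n)=k_n\bL$ lies on the hypersurface $\{\sum_i\tan\theta_i=0\}$ and the limit measure must be supported there; moreover the Barra--Gaspard/Berkolaiko--Winn density carries the weight $\sum_i L_i\sec^2(\theta_i)$, and dividing by $L(\bA(n))$ only converts this into the weight $\tfrac{1}{\abs{E}}\sum_i\sec^2(\theta_i)$ --- it does not remove the constraint. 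The correct limit is an integral against (a constant times) $\big(\sum_i\sec^2(\theta_i)\big)\,\delta\big(\sum_i\tan\theta_i\big)\,\ud\theta$, which is exactly what the paper's formula \eqref{eq:gen_average} encodes through the auxiliary $\zeta$-integral of $\ue^{\ui\zeta\sum_i\tan x_i}$. This is not cosmetic: under plain Haar measure $T/\abs{E}^2$ would converge to a pure one-sided $1/2$-stable law, whereas the function $m(\xi)$ --- and hence the constant in the theorem --- arises precisely from the conditioning onto the secular surface. Your later appeal to the KMW joint scaling limit of $(T,\sec^2\theta_1)$ is valid only for the constrained measure, so it is inconsistent with your own first display.

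Beyond that, the proposal remains a plan rather than a proof: your final paragraph concedes that the exact cancellation and the identification of the constants still need to be established. The paper carries this out explicitly: it splits $S_N/L$ into the reweighted terms $-\tfrac{1}{T}\sum_i\sec^2\ln\sec^2$ and $\ln T$, shows the first contributes exactly $-2$ for every $\abs{E}$ using $\tfrac1\pi\int_0^\pi\ue^{\ui\zeta\tan x}\,\ud x=\ue^{-\abs{\zeta}}$ and $\int_0^\infty\tfrac{a\ln(1+z^2)}{a^2+z^2}\,\ud z=\pi\ln(a+1)$, and the second contributes $2+\tfrac{1}{\ln\abs{E}}\int P_{\abs{E}}(y)\ln y\,\ud y$; the surviving constant is then $\int P(y)\ln y\,\ud y$, evaluated with $\Gamma(1/2)$ and $\Gamma'(1/2)=-(\gamma+\ln 4)\sqrt{\pi}$. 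In particular $\gamma$ enters through the half-integer Gamma derivative forced by the $y^{-3/2}$ tail of the stable density (with the $\ln 4$ cancelling against $\ln(m^2(\xi)/4)$), not through $-\int_0^\infty\ue^{-x}\ln x\,\ud x$ as you suggest. You would need to supply these computations, with the corrected measure, to have a complete proof.
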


\begin{proof}
We will use $v=\abs{E}$ during the proof to save space. 
Let us recall  two results from \cite{KeaMarWin03} on which the proof is based. First, if $f:\R^v/(\pi\Z)^v\to\R$ is a piecewise continuous function then 
\begin{equation}\label{eq:gen_average}
\begin{split}
\lim_{N\to\infty}&\frac{1}{N}\sum_{n=1}^N f(k_n\bL)\\ &=\frac{1}{2\pi^v v\bar L} \int_{-\infty}^{\infty} \int_0^{\pi}\cdots \int_0^{\pi} f(\bx) \sum_{i=1}^vL_i\sec^2(x_i) \ue^{\ui\zeta \sum_{i=1}^v \tan x_i} \ud x_1\cdots \ud x_v\ud \zeta \,\, ,
\end{split}
\end{equation}
this follows by combining Theorem 8 and equation $(15)$ in  \cite{KeaMarWin03}.

The second result we use is on the distribution of $\sum_{i=1}^v \sec^2(k_nL_i)$: There exist a probability  density $P_v(y)$ with $P_v(y)=0$ for $y\leq 0$ such that 
 for any 
continuous function with compact support $\varphi$,  we have 
\begin{equation}\label{eq:spec_average}
\lim_{N\to\infty} \frac{1}{N}\sum_{n=1}^N\varphi\bigg(\frac{1}{v^{2}}\sum_{i=1}^v \sec^2(k_nL_i)\bigg)=\int P_v(y) \varphi(y)\, \ud y \,\, .
\end{equation}
Furthermore for $y\geq 0$ 
\begin{equation}
P(v):=\lim_{v\to\infty} P_v(y)= \frac{1}{4\pi y^{3/2}} \int_{-\infty}^{\infty} \ue^{-\xi^2/4- m(\xi)^2/(4y)} m(\xi)\, \ud \xi\,\, ,
\end{equation}
where
\begin{equation}
m(\xi)=\frac{2}{\sqrt{\pi}} \ue^{-\xi^2/4}+\xi \erf(\xi/2)\,\, .
\end{equation}
 These are Theorems 3 and 4 in  \cite{KeaMarWin03}, notice that we don't need that $v\Delta L\to 0$, this condition comes in \cite{KeaMarWin03} from the fact that they consider 
 the distribution of $\sum L_j \sec^2(k_nL_j)$ and approximate 
 $\sum L_j \sec^2(k_nL_j)$ by $\sum \sec^2(k_nL_j)$, but we will be  interested in $\sum \sec^2(k_nL_j)$ only.

Now let us start by rewriting the normalised entropy in the form 
\begin{equation}
\begin{split}
\frac{S_N(\bA(n))}{L(\bA(n))}=&\frac{1}{\ln v}\frac{\bar L}{\sum_{i=1}^v L_i\sec^2(k_nL_i)}\sum_{i=1}^v -\sec^2(k_nL_i)\ln \sec^2(k_nL_i)\\
&\qquad \qquad+\frac{1}{\ln v} \frac{\bar L \sum_{i=1}^v \sec^2(k_nL_i)}{\sum_{i=1}^v L_i\sec^2(k_nL_i)} \ln\bigg(\sum_{i=1}^v \sec^2(k_nL_i)\bigg) 
\end{split}
\end{equation} 
and then using \eqref{eq:gen_average} for the first term and \eqref{eq:spec_average} for the second term we obtain 
\begin{equation}
\lim_{N\to\infty}\frac{1}{N}\sum_{n=1}^N S_N(\bA(n))=A(v)+B(v)
\end{equation}
where 
\begin{equation}
A(v)=\frac{1}{2\pi^v v\ln v} \int_{-\infty}^{\infty} \int_0^{\pi}\cdots \int_0^{\pi}  \sum_{i=1}^v-\sec^2(x_i)\ln \sec^2(x_i) \ue^{\ui\zeta \sum_{j=1}^v \tan x_j} \ud x_1\cdots \ud x_v\ud \zeta
\end{equation}
and
\begin{equation}
B(v)=\frac{1}{\ln v}\int P_v(y)\ln(v^2 y)\, \ud y=2+\frac{1}{\ln v}\int P_v(y)\ln( y)\, \ud y\,\, .
\end{equation}
Her we have used  \eqref{eq:spec_average} for a function without compact support, we should replace this by a compact approximation, but since 
we later take the limit $v\to \infty$, and $P(v)$ is flat at $0$ and decays algebraically at $\infty$, the resulting integrals are well defined.

Using the integral 
\begin{equation}
\frac{1}{\pi}\int_0^{\pi} \ue^{\ui\zeta \tan x}\, \ud x=\ue^{-\abs{\zeta}} 
\end{equation}
we can reduce the first part to 
\begin{equation}
\begin{split}
A(v)&=\frac{-1}{2\pi \ln v} \int_{-\infty}^{\infty} \int_0^{\pi}\sec^2(x)\ln \sec^2(x) \ue^{\ui\zeta \tan x}\ue^{-(v-1)\abs{\zeta}} \ud x\, \ud \zeta\\
&=\frac{-1}{\pi \ln v}\int_0^{\pi}\sec^2(x)\ln \sec^2(x) \frac{(v-1)}{(v-1)^2+\tan^2(x)} \,\, \ud x
\end{split}
\end{equation}
where we have used as well that 
\begin{equation}
\int_{-\infty}^{\infty}\exp(\ui \zeta \tan x-(v-1)\abs{\zeta})\, \ud \zeta =\frac{2(v-1)}{(v-1)^2+\tan^2 x}\,\, . 
\end{equation}
Finally we perform the substitution $z=\tan x$, and we arrive at 
\begin{equation}
A(v)= \frac{-1}{\pi \ln v}2\int_0^{\infty} \frac{(v-1)\ln(1+z^2)}{(v-1)^2+z^2}\, \ud z= \frac{-1}{\pi \ln v}2\pi \ln v=-2\,\, ,
\end{equation}
with the help of $\int_0^{\infty}  \frac{a \ln(1+z^2)}{a^2+z^2}\, \ud z=\pi \ln(a+1)$ (See 4.295 in  \cite{GraRys00}). 

Let us turn to the second term, $B(v)$, we have 
\begin{equation}
B(v)=2+\frac{1}{\ln v}\int P_v(y)\ln( y)\, \ud y
\end{equation}
and so the limit $v\to \infty$ gives 
\begin{equation}
\lim_{v\to\infty}\ln v\,  \la S_N(v) \ra =\int_0^{\infty} P(y)\ln(y)\, \ud y\,\, .
\end{equation}
If we insert the formula for $P(y)$ and exchange the order of integration, the $y$-integral becomes 
\begin{equation}
\begin{split}\int_0^{\infty}\exp\bigg(-\frac{1}{4y} m(\xi)^2\bigg)\frac{\ln(y)}{y^{3/2}}\, \ud y
&=\frac{-2}{m(\xi)}\int_0^{\infty}\ue^{-s}  \frac{\ln( sm^{-2}(\xi)4)}{s^{1/2}}\, \ud s\\
&=\frac{-2}{m(\xi)}\int_0^{\infty}\ue^{-s} s^{-1/2} \ln s\, \ud s\\
&\quad +\frac{2\ln( m^2(\xi)/4)}{m(\xi)}\int_0^{\infty}\ue^{-s}  s^{-1/2}\, \ud s\\
&=\frac{-2}{m(\xi)}\Gamma'(1/2)+\frac{2\ln( m^2(\xi)/4)}{m(\xi)}\Gamma(1/2)\\
&=\frac{\sqrt{4\pi}}{m(\xi)} [\gamma +\ln(m(\xi)^2)]
\end{split}
\end{equation}
where  $\gamma$ is Euler's constant and we used $\Gamma(1/2)=\sqrt{\pi}$ and $\Gamma'(1/2)=-(\gamma+\ln 4)\sqrt{\pi}$. Hence 
\begin{equation}
\begin{split}
\lim_{v\to\infty}\ln v \la S_N(v) \ra&= \frac{1}{\sqrt{4\pi} } \int_{-\infty}^{\infty} \ue^{-\xi^2/4}[\gamma +\ln(m^2(\xi))]\, \ud \xi\\
&=\gamma +\frac{1}{\sqrt{4\pi} } \int_{-\infty}^{\infty} \ue^{-\xi^2/4}\ln(m^2(\xi))\, \ud \xi\,\, .
\end{split}
\end{equation}
\end{proof}

Finally we want to use Lemma \ref{lem:Aa} to relate the entropy for $\bA$ to the entropy for $\ba$. We notice first that the relations between $\bA$ and $\ba$ we 
used in the proof of Lemma  \ref{lem:Aa} give us that 
\begin{equation}
L(\ba)=L(\bA)\,\, .
\end{equation}
Secondly, we use that Theorem 3.4 with $m=0$ in \cite{BerWin10} gives us that 
\begin{equation}
\lim_{N\to\infty} \frac{1}{N}\sum_{n=1}^N \frac{1}{L(\ba(n))}=1
\end{equation}
and therefore Lemma  \ref{lem:Aa}  gives
\begin{equation}
\lim_{N\to\infty}\frac{1}{N} \sum_{n=1}^N \frac{S_N(\ba(n))}{L(\ba(n))} =\frac{C_{\rm{Neumann}}+\ln 2}{\ln v+\ln 2}+o\bigg(\frac{1}{\ln v}\bigg)
\end{equation}
which is Theorem \ref{thm:average_S}.

\subsection{Equi-transmitting boundary conditions}

For equi-transmitting boundary conditions we can use \eqref{eq:star_SS} to get directly an optimal  lower bound on the entropy.

\begin{thm} Let $\hat G$ be a star graph with $\abs{E}$ edges and equi-transmitting boundary conditions at the central vertex. Then all the 
eigenfunctions satisfy
\begin{equation}
S_N(\bA(n))\geq \frac{1}{2} \frac{\ln (\abs{E}-1)}{\ln \abs{E}}\,\, .
\end{equation}
\end{thm}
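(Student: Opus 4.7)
The plan is to apply the entropic uncertainty principle directly to the unitary matrix $\sigma_k$ introduced in \eqref{eq:sigma_k}, using the fact that the equi-transmitting $S$-matrix has all off-diagonal entries of the same modulus. Since $\bA(n)$ satisfies $\sigma_k \bA(n)=\bA(n)$, we are in precisely the setting of Corollary \ref{cor:entr_unc_ev} with $t=1$, and the inequality \eqref{eq:star_SS} already packages this for us: it remains only to compute $\max_{e,e'} |(\sigma_k)_{e,e'}|^2$.

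First I would observe that the $k$-dependence in $\sigma_k = \ue^{\ui k_n \bL}\sigma_0 \ue^{\ui k_n \bL}$ is entirely contained in the two diagonal unitary factors $\ue^{\ui k_n \bL}$. Consequently, the matrix element $(\sigma_k)_{e,e'}$ equals $\ue^{\ui k_n(L_e+L_{e'})}(\sigma_0)_{e,e'}$, so
\begin{equation}
\max_{e,e'}|(\sigma_k)_{e,e'}|^2 = \max_{e,e'}|(\sigma_0)_{e,e'}|^2.
\end{equation}
The key point is that conjugation by diagonal unitaries affects only phases, not moduli, of the matrix entries; this eliminates any dependence of the estimate on the eigenvalue $k_n$ or on the edge lengths $\bL$.

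Next I would use the defining property of the equi-transmitting local $S$-matrix at the central vertex of a star graph with $\abs{E}$ edges. Here the central vertex has degree $\abs{E}$, so the entries $(\sigma_0)_{e,e'}$ vanish on the diagonal and have squared modulus $1/(\abs{E}-1)$ off the diagonal. Thus $\max_{e,e'}|(\sigma_0)_{e,e'}|^2 = 1/(\abs{E}-1)$.

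Plugging this into \eqref{eq:star_SS} gives
\begin{equation}
S_N(\bA(n)) \geq -\frac{1}{2\ln\abs{E}}\ln\!\left(\frac{1}{\abs{E}-1}\right) = \frac{1}{2}\frac{\ln(\abs{E}-1)}{\ln\abs{E}},
\end{equation}
which is the claim. There is no substantive obstacle here: once one notices that the diagonal phase factors drop out of the modulus, the proof is a one-line application of the entropic uncertainty principle. The only thing worth emphasising is why this bound is essentially sharp, namely that equi-transmitting $S$-matrices are the unique choice (up to sign) making every off-diagonal entry of $\sigma_0$ of equal modulus with zero diagonal, and these are precisely the matrices for which the Maassen--Uffink inequality is saturated in the limit $\abs{E}\to\infty$.
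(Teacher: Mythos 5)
Your proof is correct and follows the paper's own argument exactly: apply the entropic uncertainty bound \eqref{eq:star_SS} to $\sigma_k$, note that the diagonal phase factors do not change the moduli of the entries, and use $\max_{e,e'}\abs{(\sigma_0)_{e,e'}}^2=(\abs{E}-1)^{-1}$ for the equi-transmitting matrix. The only difference is that you spell out the phase-invariance step explicitly, which the paper leaves implicit.
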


\begin{proof}
Since the boundary conditions are equi-transmitting the $S$ matrix elements satisfiy $\abs{{S}_{ij}}^2=(1-\delta_{ij})(\abs{E}-1)^{-1}$. Hence \eqref{eq:star_SS}
gives 
\begin{equation}
S(\bA(n))=\frac{1}{2} \frac{\ln (\abs{E}-1)}{\ln \abs{E}}\,\, .
\end{equation}
\end{proof}

This result is optimal in the sense that we get for $\abs{E}\to\infty$ the best bound which can be obtained using the entropic uncertainty principle.

Combing this with Lemma \ref{lem:Aa} gives Theorem \ref{thm:star_et}.


\section{Comparison with Numerical Results}

\begin{figure}[t] 
\begin{center}
\includegraphics[width=8cm, height=4cm]{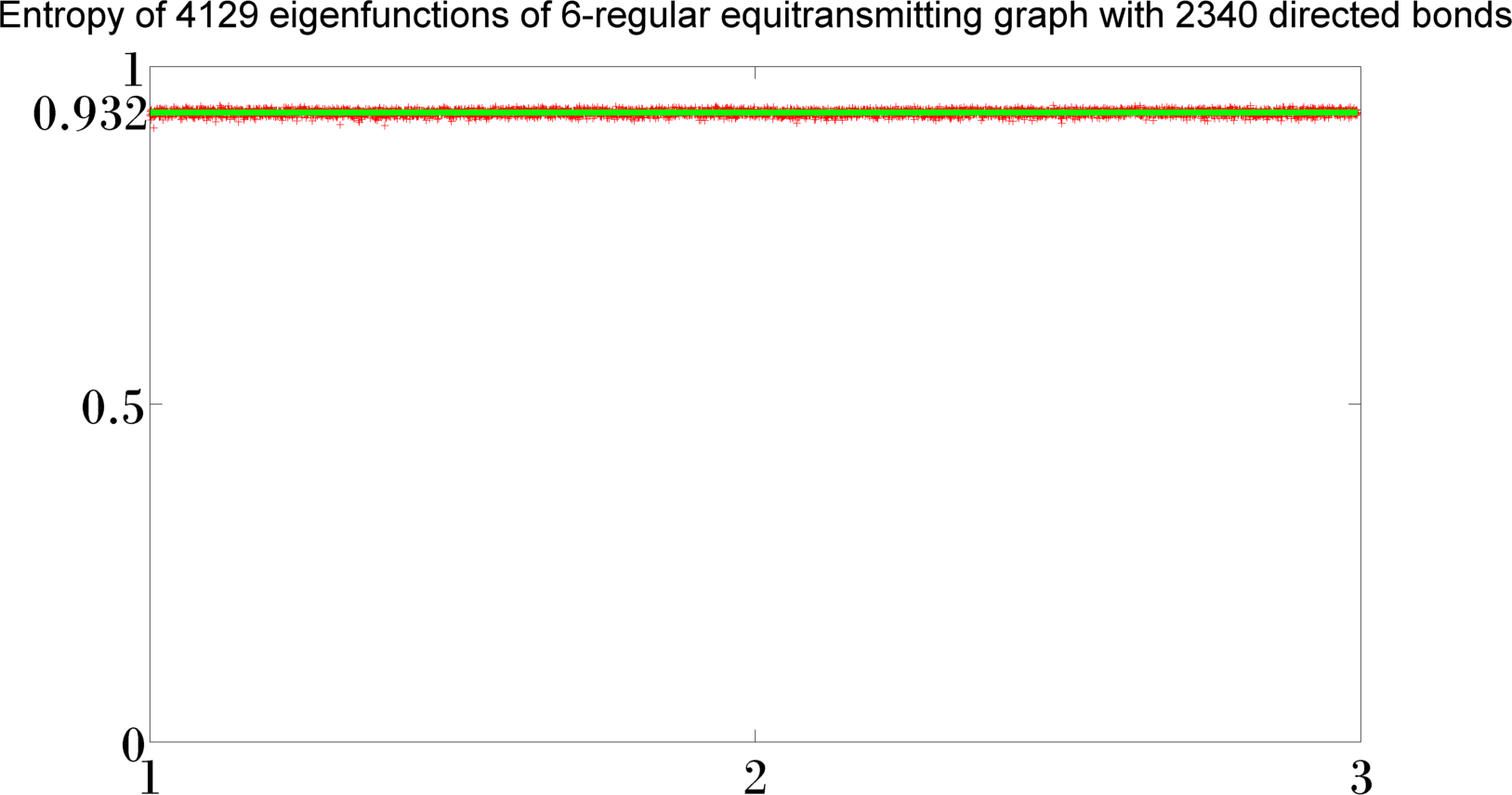}
\includegraphics[width=8cm, height=4cm]{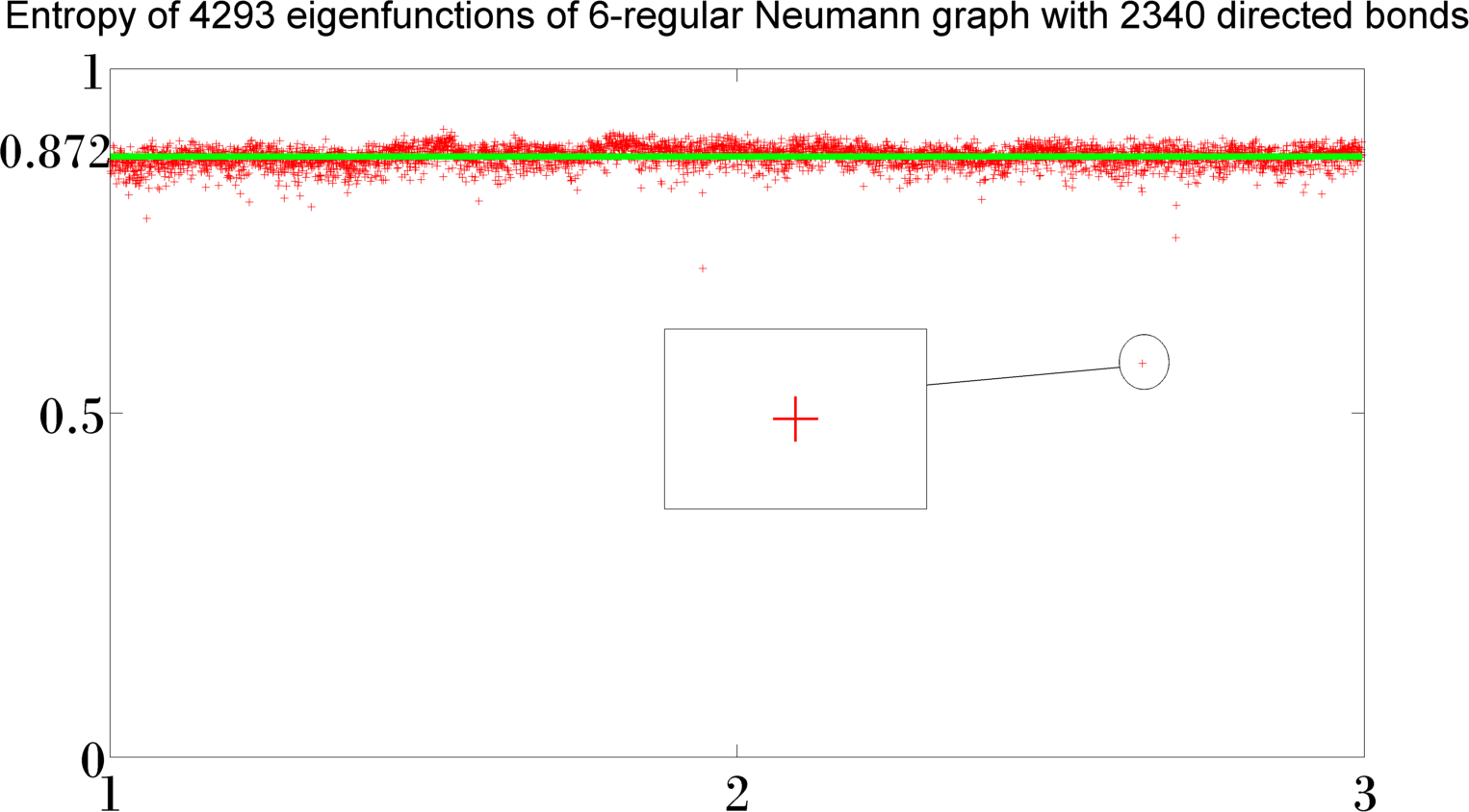}
\end{center}
\begin{center}
\includegraphics[width=8cm, height=4cm]{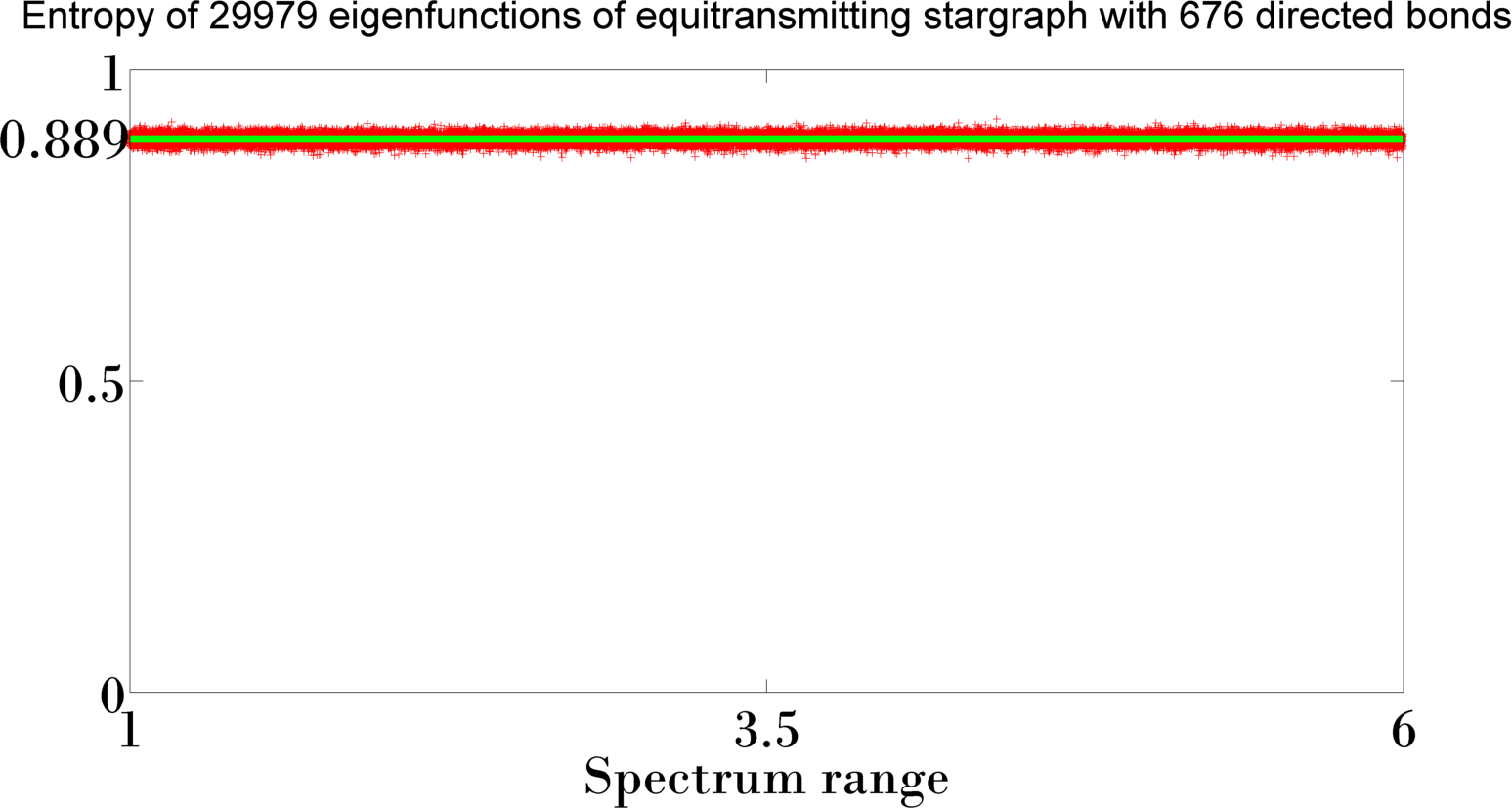}
\includegraphics[width=8cm, height=4cm]{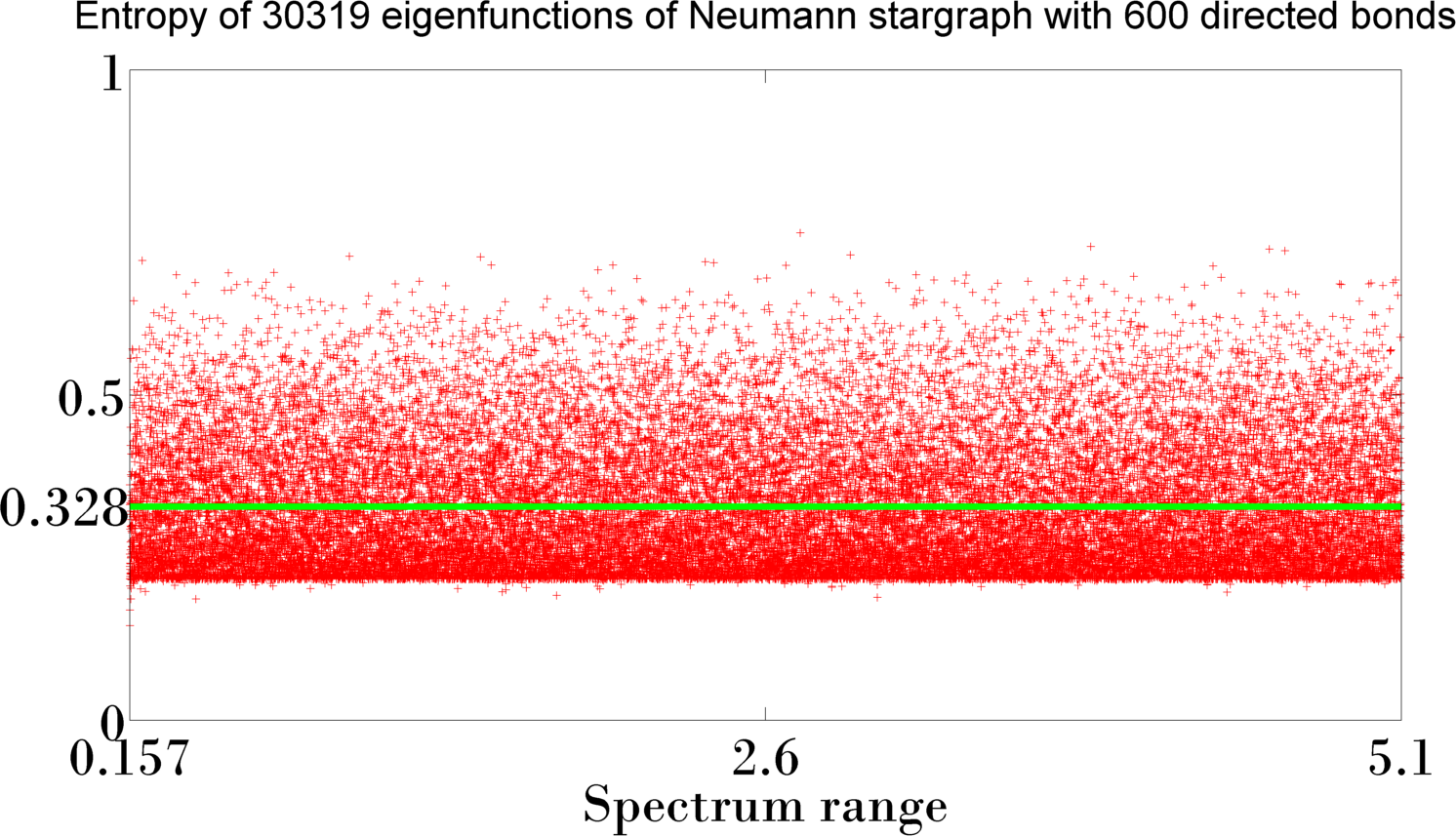}
\end{center}
\caption{Entropy of eigenfunctions for different graphs and different boundary conditions: 
The two plots on the top are for  a regular graph with degree 6 and 602 vertices which corresponds to 3612 directed bonds, 
top left we have equi-transmitting boundary conditions and top right Neumann boundary conditions on the vertices. 
The two plots on the bottom are for star graphs. Bottom left is with equi-transmitting boundary conditions and bottom right 
for Neumann boundary conditions. }
\label{fig:distrib_entropies}
\end{figure}

In this section we will compare our estimates on the entropy with numerical computations and discuss as well some connections to previous results 
on eigenfunctions on graphs, in particular \cite{SchaKot03,CdV14} and \cite{KeaMarWin03,BerKeaWin04}. 

To model expanders we choose random $d$-regular graphs, with $d=6$. For large size such graphs will be with high probability be expanders, see \cite{HooLinWig06}, 
but they do not necessarily have large girth. But they have with high probability very few short cycles \cite{HooLinWig06}, so are quite close to graphs with large girth. 
For the equi-transmitting boundary condition we choose the local $S$-matrix given by \eqref{eq:equi_trans_S} with \eqref{eq:Leg} and for 
Neumann we have the matrix in   \eqref{eq:Neumann_S}. The length we choose randomly from the interval $[2,10]$

In Figure \ref{fig:distrib_entropies} we show the entropies of all eigenfunctions in a certain spectral range for a $d$-regular and a star graph with 
both choices of boundary conditions. We see that for equi-transmitting boundary conditions the entropies are very large for both graphs, 
and have a very narrow distribution, indicating that the eigenfunctions behave very uniform. In particular the values for the entropy are 
well above the lower bound of $1/2$ we derived for expanders with large girth. For Neumann boundary condition on the regular graph the 
entropies a large to, but not quite as large as in the  equi-transmitting case, and the distribution is a bit wider as well and we see a few outlier, i.e., eigenfunctions 
with a rather small entropy. Finally the Neumann star graph shows a very different behaviour, the entropies have a much broader distribution and are much smaller. 

We will discuss now in some more detail how the entropy varies with the size of the graph. 

\begin{figure}[t]
\begin{center}
\includegraphics[width=7cm, height=4cm]{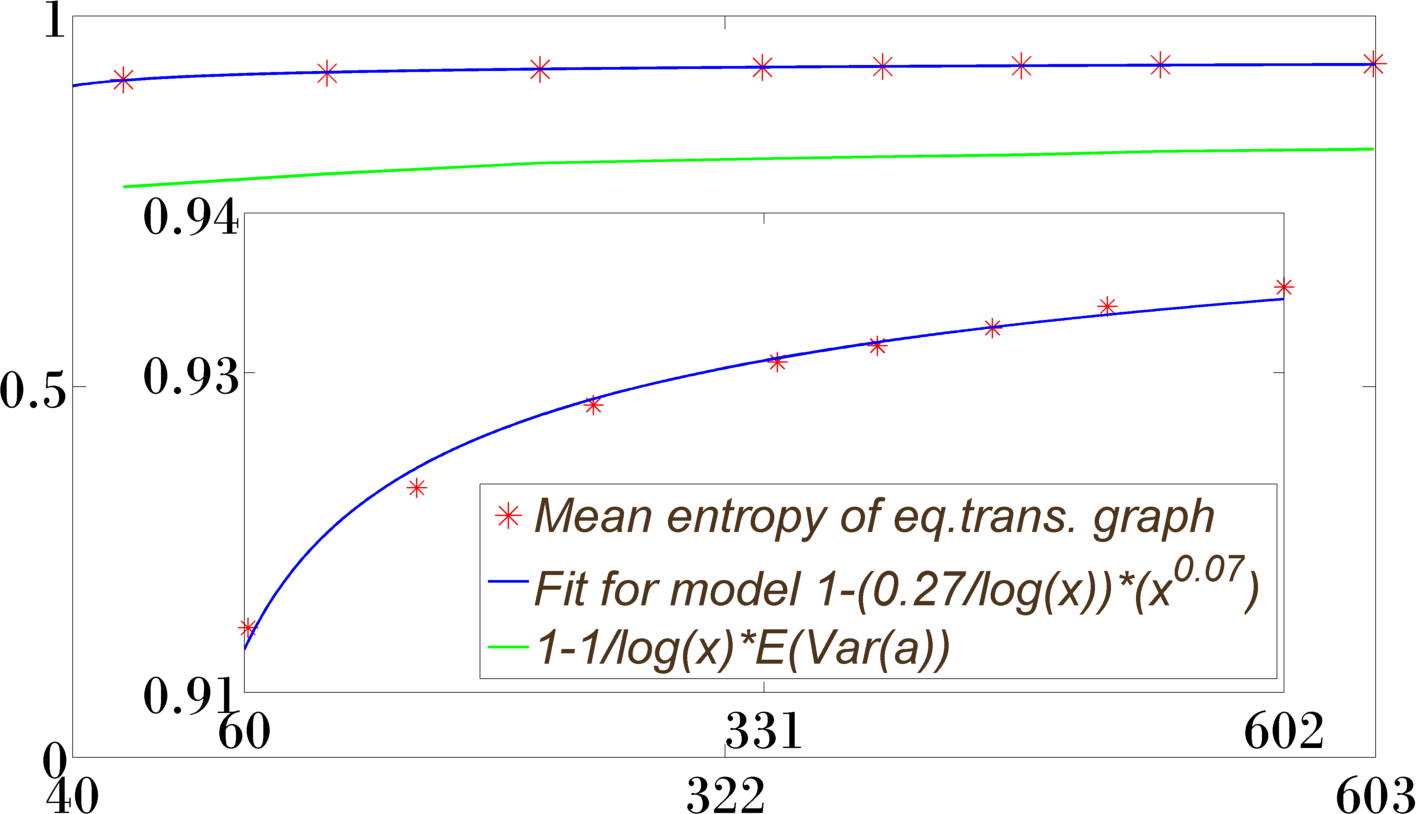}
\includegraphics[width=7cm, height=4cm]{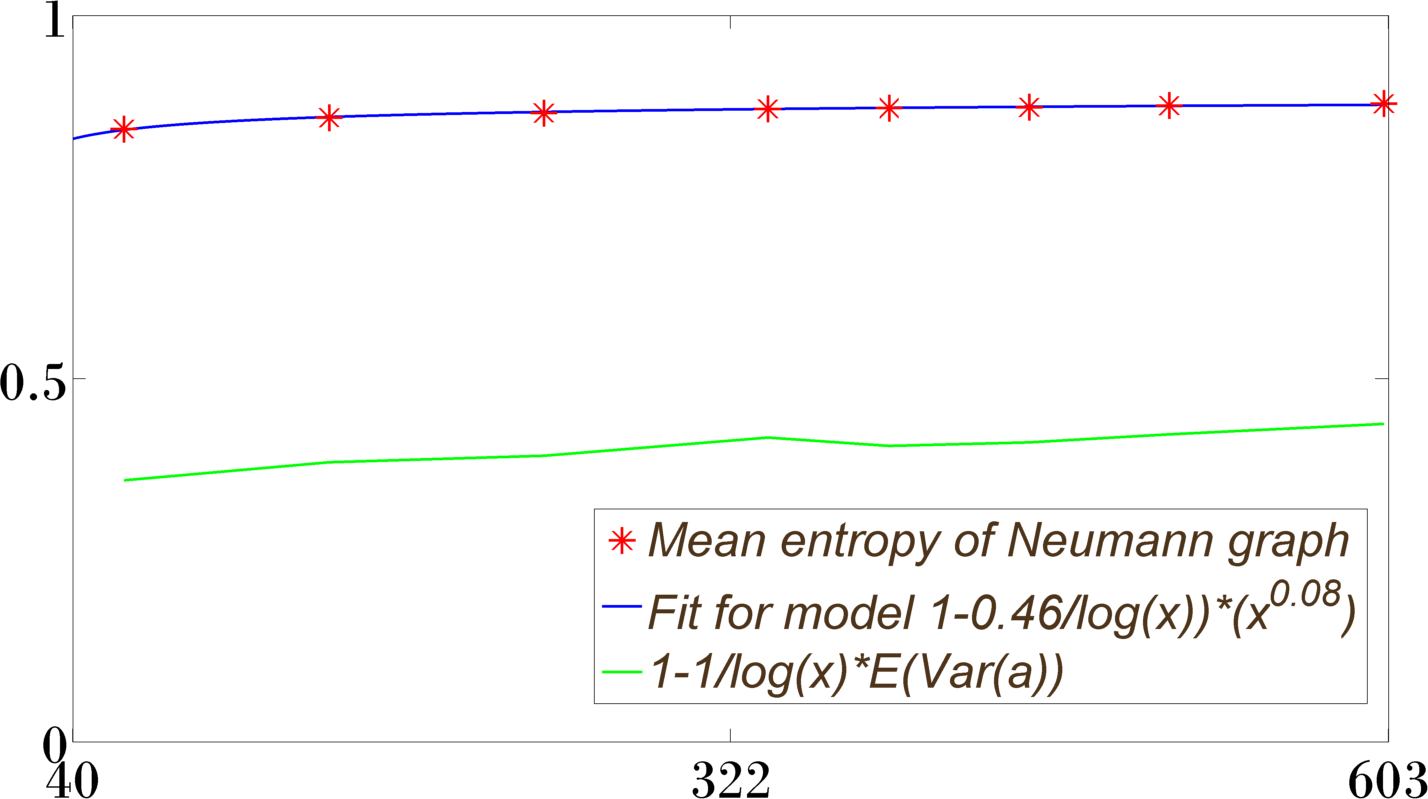}
\end{center}
\begin{center}
\includegraphics[width=7cm, height=4cm]{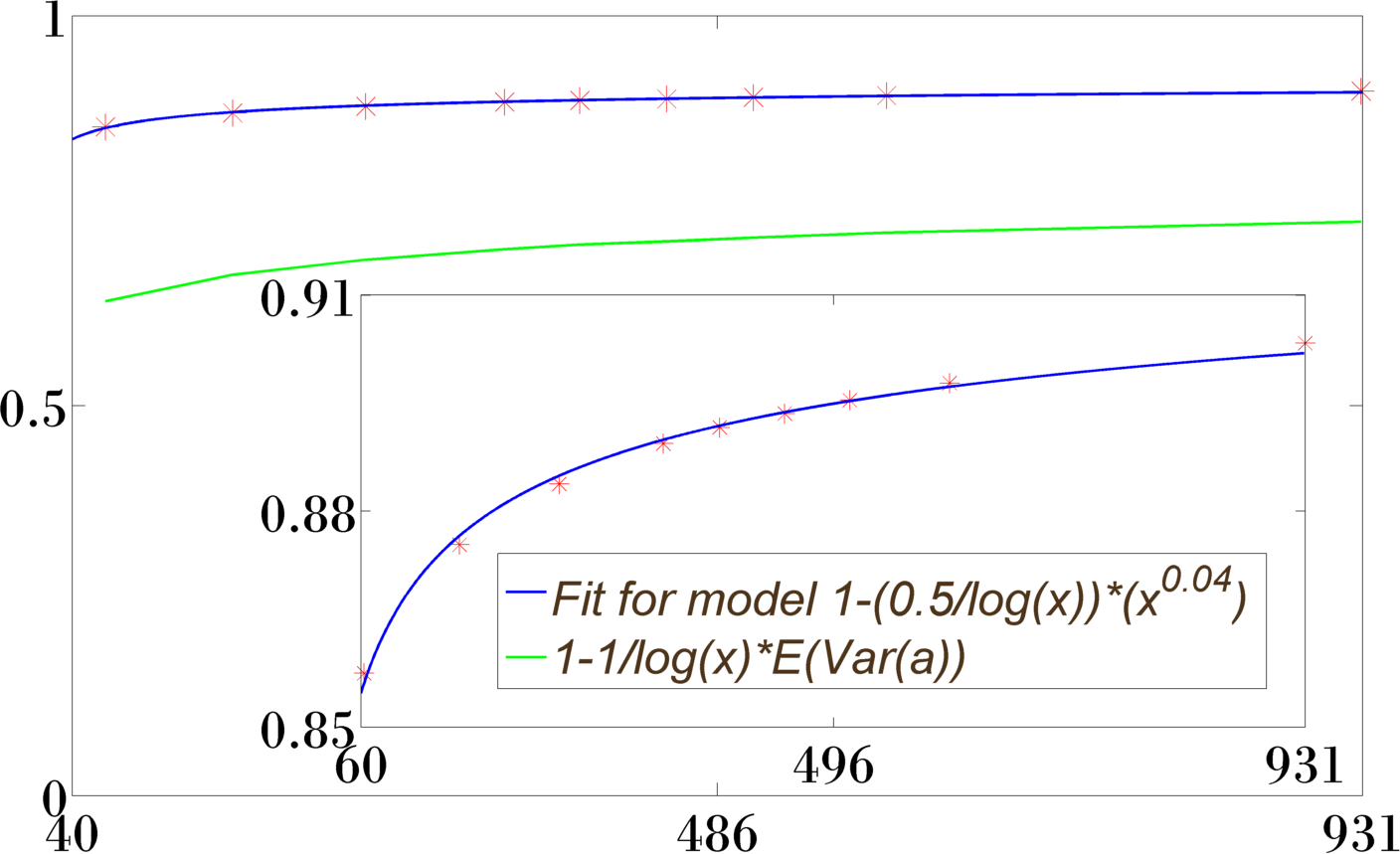}
\includegraphics[width=7cm, height=4cm]{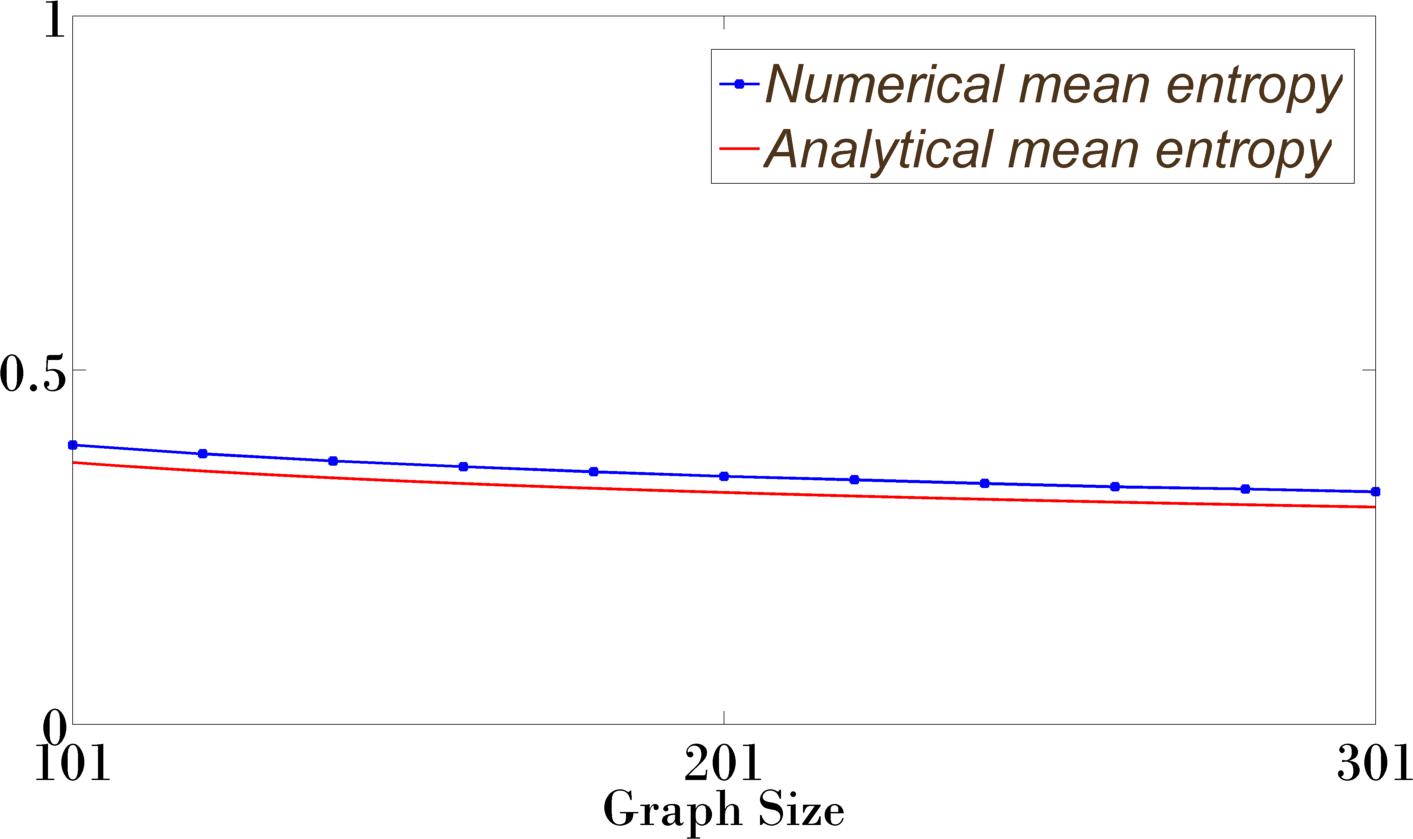}
\end{center}
\caption{Mean entropy of eigenfunctions for different graphs and different boundary conditions, averaged over a spectral window: 
The two plots on the top are for  a regular graph with degree 6 and 602 vertices which corresponds to 3612 directed bonds, 
top left we have equi-transmitting boundary conditions and top right Neumann boundary conditions on the vertices. 
The two plots on the bottom are for star graphs. Bottom left is with equi-transmitting boundary conditions and bottom right 
for Neumann boundary conditions. The green curve is the lower bound obtained from Lemma \ref{lem:entr_var} using numerical data for the variance.}
 \label{fig:growth_entropies}
\end{figure}

\subsection{Relation to quantum ergodicity and the variance}

The plots in Figure \ref{fig:growth_entropies} for the graphs with equi-transmitting boundary conditions all show an increase of the 
entropy of eigenfunctions with size of the graph. Furthermore for the same graph Figure \ref{fig:distrib_entropies} showed that the 
distribution of the entropies are narrowly concentrated around the mean. So it looks as if the entropy of eigenfunctions for these graphs 
approaches the maximal value for large graphs, and the eigenfunctions become equidistributed.

The test this further we will look at another common quantity to measure how equidistributed a vector is,  the variance. 
If $\ba\in \C^B$ with $\norm\ba=1$, then the vector is equidistributed if 
$\abs{a_b}^2\approx 1/B$, or $B\abs{a_b}^2\approx 1$, $b=1, \cdots , B$. The 
variance then measures how far the components of the vector deviate on average  from being equidistributed,  
\begin{equation}
V(\ba):=\frac{1}{B}\sum_{b=1}^B (B\abs{a_b}^2-1)^2\,\, ,\quad \text{where}\quad \norm{\ba}=1\,\, .
\end{equation}
If the variance is small then the vector is close to equidistribution. 

The variance can be used to estimate the entropy: 

\begin{lem}\label{lem:entr_var}
Let $\ba\in \C^B$ and $\norm{\ba}=1$, then 
\begin{equation}
S_N(\ba)\geq 1-\frac{1}{\ln B}\, V(\ba)\,\, .
\end{equation}
\end{lem}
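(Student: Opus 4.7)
The plan is to reinterpret both sides in information-theoretic terms. Set $p_b:=\abs{a_b}^2$, so that $(p_b)_{b=1}^B$ is a probability vector on $\{1,\dots,B\}$. A direct manipulation gives
\begin{equation}
\ln B - S(\ba)=\sum_{b=1}^{B} p_b\ln(Bp_b)\,\, ,
\end{equation}
which is the Kullback--Leibler divergence of $p$ from the uniform distribution on $B$ points. Meanwhile the variance is exactly $B$ times the chi-squared divergence from uniform:
\begin{equation}
V(\ba)=\frac{1}{B}\sum_{b=1}^{B}(Bp_b-1)^2\,\, .
\end{equation}
So after multiplying the target inequality by $\ln B$ it becomes the statement that KL-divergence from uniform is bounded by chi-squared divergence from uniform, namely
\begin{equation}
\sum_{b=1}^{B} p_b\ln(Bp_b)\;\leq\;\frac{1}{B}\sum_{b=1}^{B}(Bp_b-1)^2\,\, .
\end{equation}

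The second step is the pointwise inequality
\begin{equation}
x\ln x\;\leq\;x^2-x\qquad (x\geq 0)\,\, ,
\end{equation}
with the convention $0\ln 0=0$. I would verify this by letting $h(x):=x\ln x-x^2+x$, computing $h(1)=0$, $h'(x)=\ln x+2-2x$, and noting $h'(1)=0$ while $h''(x)=\tfrac{1}{x}-2$ gives $h''(1)=-1<0$, so $x=1$ is the unique critical point and a maximum; together with the behaviour $h(0^+)=0$ and $h(x)\to -\infty$ as $x\to\infty$ this establishes $h\leq 0$ everywhere on $[0,\infty)$.

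Finally, I would apply the pointwise inequality with $x=Bp_b$ and sum over $b$:
\begin{equation}
\sum_{b=1}^{B}(Bp_b)\ln(Bp_b)\;\leq\;\sum_{b=1}^{B}\bigl((Bp_b)^2-Bp_b\bigr)\,\, .
\end{equation}
Using $\sum_b Bp_b=B$, the right-hand side equals $\sum_b(Bp_b)^2-B=\sum_b(Bp_b-1)^2$, since expanding the square produces the same cross term $-2\sum_b Bp_b+B=-B$. Dividing by $B$ gives the reformulated inequality above, and then dividing by $\ln B$ and rearranging yields $S_N(\ba)\geq 1-V(\ba)/\ln B$.

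There is no substantial obstacle: the content of the lemma is the elementary bound $\mathrm{KL}(p\|u)\leq \chi^2(p\|u)$, implemented by the single convexity inequality $x\ln x\leq x^2-x$. The only point one has to be slightly careful about is the normalisation $\sum p_b=1$, which is used precisely to identify $\sum(Bp_b)^2-B$ with $\sum(Bp_b-1)^2$.
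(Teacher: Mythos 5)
Your proof is correct and is essentially the paper's proof in different clothing: the pointwise bound $x\ln x\leq x^2-x$ evaluated at $x=B\abs{a_b}^2$ is exactly the paper's inequality $\ln(1+y)\leq y$ with $y=B\abs{a_b}^2-1$, multiplied through by $\abs{a_b}^2$ and summed, and the normalisation step identifying $\sum_b(B p_b)^2-B$ with $\sum_b(Bp_b-1)^2$ is the same bookkeeping. One small inaccuracy in your verification of the pointwise inequality: $x=1$ is \emph{not} the unique critical point of $h(x)=x\ln x-x^2+x$, since $h'(x)=\ln x+2-2x$ also vanishes at a second point in $(0,1/2)$ (where $h$ has a local minimum), so the argument ``unique critical point, which is a maximum'' is not quite right as stated; the conclusion $h\leq 0$ still holds, and is obtained immediately by writing $h(x)=x(\ln x-x+1)\leq 0$ from the standard bound $\ln x\leq x-1$.
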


\begin{proof}
This is consequence of the basic inequality $\ln(1+x)\leq x$, for $x>-1$. We have 
\begin{equation}
\begin{split}
\ln \abs{a_b}^2&=\ln\bigg( \frac{1}{B}(1+(B\abs{a_b}^2-1))\bigg)\\
&=-\ln B+\ln (1+(B\abs{a_b}^2-1))\leq -\ln B+(B\abs{a_b}^2-1)\,\, , 
\end{split}
\end{equation}
and inserting this into the definition of $S_N(\ba)$ gives immediately the result. 
\end{proof}

The variance is closely related to quantum ergodicity and the random wave model for eigenfunctions graphs which was developed and 
studied in \cite{GnuKeaPio10}. Let $D$ be a diagonal $B\times B$ matrix, with diagonal matrix elements $D_b$, and consider 
\begin{equation}
\bar D:=\frac{1}{\sum_{b=1}^BL_b} \sum_{b=1}^BL_b D_b\,\, ,\quad F_D:=\lim_{N\to\infty} \frac{1}{N} \sum_{n=1}^N\frac{\abs{\la \ba(n), DL\ba(n)\ra}^2}{\la \ba(n), L\ba(n)\ra} 
\end{equation}
where $L$ is the diagonal matrix of bond-length. Then one of the main results derived in \cite{GnuKeaPio10} is that if $\hat G_n$ are a family of graphs with finite spectral gap 
then if $D^{(n)}$ is a family of diagonal matrices whose elements are bounded uniformly in $n$ and which satisfies $\bar D^{(n)}=0$, then  there exist a $C>0$, 
independent of $n$, such that 
\begin{equation}\label{eq:QE}
F_{D^{(n)}} \leq \frac{C}{B}\,\, .
\end{equation}
This is a quantum ergodicity statement with an optimal rate. 

To connect this to the variance, let us choose $D$ such that $L_bD_b\in\{\pm 1\}$ are independently distributed with equal probability for $+1$ and $-1$, then 
$\E(\abs{\la \ba(n), DL\ba(n)\ra}^2=\sum_{b\in \hat E} \abs{a_b(n)}^4$ and so \eqref{eq:QE} implies that on average 
\begin{equation}
\sum_{b=1}^B \abs{a_b(n)}^4=O\bigg(\frac{1}{B}\bigg)\,\, .
\end{equation}
This implies that the variance of an eigenfunction satisfies on average
\begin{equation}
V(\ba(n))=O(1)\,\, .
\end{equation}
Notice that this is related to the inverse participation ratio, used for instance in \cite{SchaKot03}. We don't expect the variance to go to $0$, because that would imply equidistribution on microscopic scales, we rather expect that at that scale quantum fluctuations are present. 
Quantum ergodicity then predicts equidistribution on macroscopic scales where we average over many bonds. 

So if the average of the variances tend to a constant, then Lemma \ref{lem:entr_var} suggest that the entropy will tend at a  logarithmic rate to $1$. We computed the variances for 
the d-regular graphs and the star graph with Neumann boundary conditions,  they stay almost constant and show only a very mall increase with the size of the graph. 
 For comparison we included the lower bound from  Lemma \ref{lem:entr_var} with the numerically determined variances in the plot of the 
entropies in Figure \ref{fig:growth_entropies}. We fitted as well a model function of the form $f(B)=1-\alpha \frac{B^{\beta}}{\ln B}$ where a small $\beta\geq 0$ models the slight 
increase of the  averaged variances over the observed $B$ interval. We see that the model fits the date very well.

Let us now turn to the star graph with Neumann boundary conditions.  In Figure \ref{fig:growth_entropies} we see that the mean entropy decreases in a fashion which is compatible with the prediction in Theorem  \ref{thm:average_S}, but the numerically observed data are larger than the prediction. We do not know the reason for this deviation, it could be that 
the prediction in Theorem  \ref{thm:average_S} is only reached for very large graph size. Another issue is that the properties of the star graph are quite sensitive to the rational independence of the length of the edges, and this could pose a problem for numerical computations with a large number of edges.

\begin{figure}[t] 
\begin{center}
\includegraphics[width=8cm, height=5cm]{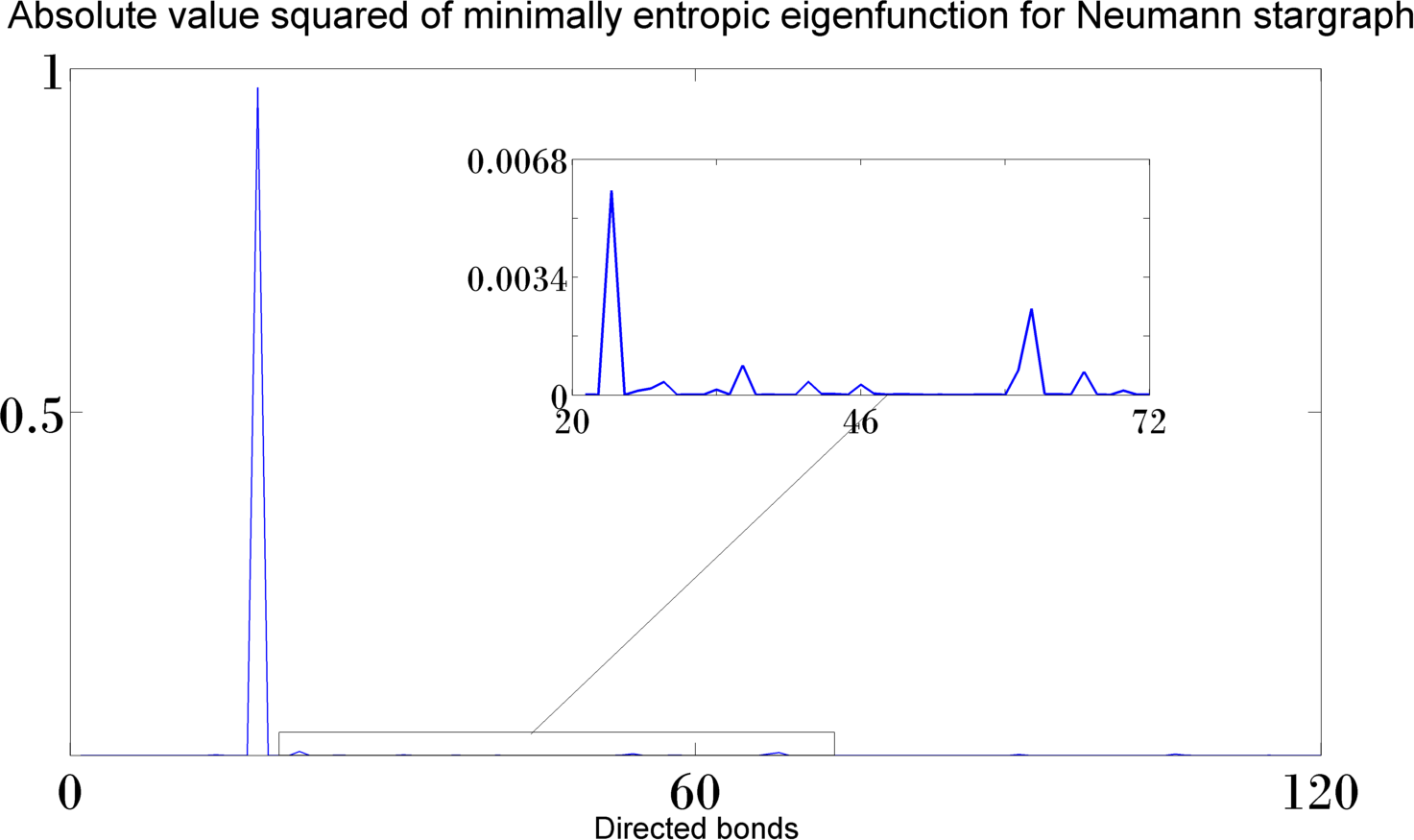}
\includegraphics[width=8cm, height=5cm]{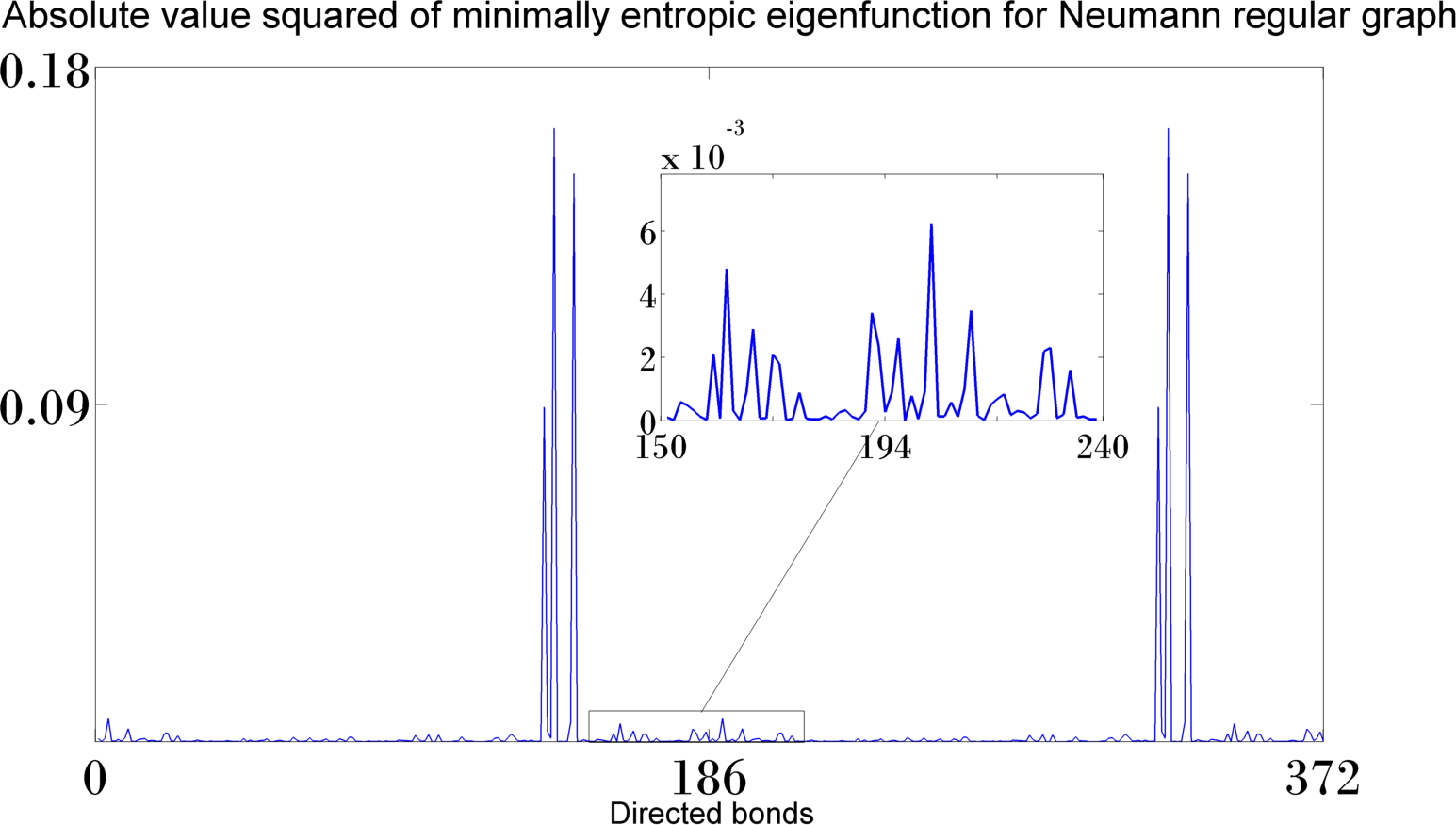}
\end{center}
\caption{Two examples of eigenfunctions with small entropy, shown are plots of the absolute value of the components of $\bA$ and $\ba$, respectively. On the left is an eigenfunction 
on a Neumann star graph with 120 edges, it is an eigenfunction with small energy ($k= 0.1579...$) which is almost completely concentrated on one edge only. 
On the right is an eigenfunction on a Neumann $6$-regular graph with $186$ edges, this eigenfunction is concentrated on a cycle of period $3$. }
\label{fig:localised}
\end{figure}

\subsection{Eigenfunctions with small entropy}

For the quantum graphs with Neumann boundary conditions we found in the numerical data some eigenfunctions with exceptionally small entropy, 
both on the regular graphs and on the star graphs.  It is well known that Neumann boundary conditions allow for eigenfunctions to concentrate 
on closed cycles, see  \cite{SchaKot03,CdV14}. Let us recall why this is the case: a function $\psi$ satisfies Neumann boundary conditions at a vertex $i$ 
if  
\begin{itemize}
\item[(a)] $\psi_{[ij]}(0)=\psi_{[ij']}(0)$ for all $j,j' $ with $i\sim j$ and $i\sim j'$
\item[(b)]    $\sum_{j\sim i}\psi_{[ij]}'(0)=0$. 
\end{itemize}
If the eigenfunction vanishes on some bonds connected to $i$, then by (a) 
$\psi_{[ij]}(0)=0$ for all bonds connected to $i$, so the  condition (b) can be satisfied if at least two of the terms in the sum are non-zero and cancel each other, 
and it can not be satisfied if only one term is non-zero. This way one can piece together eigenfunctions which are concentrated on a closed cycle, provided the 
length of the bonds in that cycle are rationally dependent. If they are not rationally dependent then one can   still find a sequence of $k_j$ such the corresponding 
eigenfunctions concentrate for large $j$ on the closed cycle. On a d-regular graph the shortest  cycles have period $3$ and they appear with finite probability 
in a random d-regular graph, therefore we expect to see some eigenfunctions which concentrate on them. In Figure \ref{fig:distrib_entropies} we see in the 
plot of the entropies of eigenfunctions on the 6-regular graph with Neumann boundary condition one eigenfunction with rather small entropy, this 
eigenfunction is plotted in the right panel of Figure \ref{fig:localised}. We see that it is highly localised on $3$ edges (corresponding to $6$ bonds), and inspection of the adjacency matrix show that these edges are adjacent and so form a 3-cycle.

On Neumann star graphs the shortest cycle on which eigenfunctions can concentrate for large $k$ has two edges, see \cite{BerKeaWin04}, and we 
see plenty of eigenfunctions of  this type in our numerical data. But surprisingly we see as well eigenfunctions which are almost completely concentrated on 
one edge only, see the left panel in Figure \ref{fig:localised}. The Neumann boundary conditions prohibit a function from being concentrated on one edge only, 
but the example we show belongs to a graph with a large number of edges, and although the eigenfunction is large on one edge, and small on all others, the 
large number of edges allow to compensate for the smallness of the eigenfunction on them. Notice that in  Figure \ref{fig:localised} we plot the modulus squared 
of the coefficients, which increases the perceived difference in the size of the coefficients. In the boundary conditions the coefficients themselves enter and 
the large number of small ones add up to cancel the one large one in condition (b). We notice as well that the eigenvalue of this eigenfunction is very small and that 
further eigenfunctions of this type all appeared at the bottom of the spectrum. Based on this observation we can get a heuristic explanation for the 
appearance of these eigenfunctions. 

Let $\sigma_k$ be the $S$-matrix \eqref{eq:sigma_k}, then the eigenvalues $k_n$ of the star graph are determined by the condition that 
$\sigma_k$ has an eigenvalue $1$, hence if we follow the eigenvalues of $\sigma_k$ on the unit circle as $k$ varies, we find an eigenvalue of the quantum graph whenever 
one of the eigenvalues of $\sigma_k$ crosses $1$. We will write the eigenvalues of $\sigma_k$ as $\ue^{\ui\theta_j(k)}$, $j=1, \cdots ,\abs{E}$, and we will follow their evolution for small $k$. 
The matrix $\sigma_0$ has an eigenvalue $1$ with multiplicity $1$ and an eigenvalue $-1$ with multiplicity $\abs{E}-1$. Now a standard identity in the spirit of the Feynman Hellman theorem gives 
\begin{equation}\label{eq:change_theta}
\frac{\ud \theta_j(k)}{\ud k}=2 \la \bA_j, L\bA_j\ra
\end{equation}
where $\bA_j$ is a normalised eigenvector of $\sigma_k$ with eigenvalue $\ue^{\ui\theta_j(k)}$, see \cite{BerWin10}. From this we learn that the eigenvalues 
$\ue^{\ui\theta_j(k)}$ move counterclockwise around the unit circle if we increase $k$, and in particular that there is a gap between 
$k_0=0$ and $k_1$ which is determined by the time it takes for the fastest eigenvalue starting at $\theta_j(0)=\pi$ to reach $\theta_j(k_1)=2\pi$. 
But \eqref{eq:change_theta} tells us that the way to make this gap, and therefore $k_1$, small, is to have an eigenvector $\bA_j$ which is 
concentrated on the longest edge, so that the right hand side of  \eqref{eq:change_theta} becomes as large as possible, i.e., $\la \bA_j, L\bA_j\ra=L_{max}$, 
and then 
\begin{equation}
k_1=\frac{\pi}{2 L_{max}}\,\, . 
\end{equation}
 Reversing the argument, we conclude that if we have a graph with one edge significantly longer than the others and 
if $k_1\approx \frac{\pi}{2 L_{max}}$, then the corresponding eigenfunction has to be concentrated on the longest edge. This is a phenomenon which 
can become more pronounced for large graphs, since the boundary conditions allow then for a larger concentration on a single bond.
The eigenfunction shown on the left panel of  Figure \ref{fig:localised} is on a graph with $L_{max}=9.9691$ and then we obtain 
$\frac{\pi}{2 L_{max}}=0.1576$ which is very close to the eigenvalue $k_1= 0.1579$. This confirms our heuristic picture of the mechanism behind 
the eigenfunctions localised almost completely on a single bond.

{\bf{Acknowledgements:}} This work was carried out using the computational facilities of the 
Advanced Computing Research Centre at the  University of Bristol.


\small

\end{document}